\DeclareMathOperator{\poly}{poly}
\newcommand{\eps}{\varepsilon}
\newtheorem{definition}{Definition}
\newtheorem{lemma}[definition]{Lemma}
\newtheorem{theorem}[definition]{Theorem}
\newtheorem{corollary}[definition]{Corollary}
\newcommand{\mathsub}[1]{_{\mbox{\scriptsize\rm #1}}}
\newcommand{\good}{\mathsub{good}}
\newcommand{\bad}{\mathsub{bad}}
\newcommand{\gneigh}{\mathsub{good-neighbor}}
\newcommand{\hcond}{\mathsub{high-conductance}}
\newcommand{\lcond}{\mathsub{low-conductance}}
\newcommand{\findneigh}{{\tt Find-Neighborhood}}
\begin{document}

\title{An Efficient Partitioning Oracle for Bounded-Treewidth Graphs}

\author{
Alan Edelman\\
MIT\\
{\tt edelman@mit.edu}
\and
Avinatan Hassidim\\
Google\\
{\tt avinatanh@mit.edu}
\and
Huy N. Nguyen\\
MIT\\
{\tt huy2n@mit.edu}
\and
Krzysztof Onak\thanks{Research supported by a Simons Postdoctoral Fellowship and NSF grants 0732334 and 0728645.}\\
CMU\\
{\tt konak@cs.cmu.edu}
}

\maketitle

\begin{abstract}
Partitioning oracles were introduced by Hassidim {\it et al.}\ (FOCS 2009) as a generic tool for constant-time algorithms. For any $\eps > 0$, a~partitioning oracle provides query access to a fixed partition of the input bounded-degree minor-free graph, in which every component has size $\poly(1/\eps)$, and the number of edges removed is at most $\eps n$, where $n$ is the number of vertices in the graph.

However, the oracle of Hassidim {\it et al.}\ makes an exponential number of queries to the input graph to answer every query about the partition. In this paper, we construct an efficient partitioning oracle for graphs with constant treewidth. The oracle makes only $O(\poly(1/\eps))$ queries to the input graph to answer each query about the partition.

Examples of bounded-treewidth graph classes include $k$-outerplanar graphs for fixed $k$, series-parallel graphs, cactus graphs, and pseudoforests. Our oracle yields $\poly(1/\eps)$-time property testing  algorithms for membership in these classes of graphs. Another application of the oracle is a $\poly(1/\eps)$-time algorithm that approximates the maximum matching size, the minimum vertex cover size, and the minimum dominating set size up to an additive $\eps n$ in graphs with bounded treewidth. Finally, the oracle can be used to test in $\poly(1/\eps)$ time whether the input bounded-treewidth graph is $k$-colorable or perfect. 
\end{abstract}

\section{Introduction}

Many NP-complete graph problems can be easily solved on graphs with bounded treewidth. For example, approximating vertex cover up to a multiplicative factor better than 1.36 is  known to be NP hard~\cite{DinurSafra05}.
In contrast, for graphs with bounded treewidth, one can in fact  find an optimal vertex cover in time that is linear in the size of the graph, but depends exponentially on treewidth \cite{bodlaender1988dynamic}. In this paper, we investigate yet another scenario in which bounded-treewidth graphs turn out to be much easier to deal with and allow for more efficient algorithms than general graphs.

\paragraph{Bounded Treewidth.}
The \emph{tree decomposition} and \emph{treewidth} were introduced by Robertson and Seymour \cite{robertson1984graph,robertson1986graph}, and later found many applications in the design of algorithms and machine learning (a nice, though outdated survey is \cite{bodlaender1994tourist}; see also \cite{atserias2008digraph,bodlaender2007combinatorial} for more recent applications). A \emph{tree decomposition} of a graph $G=(V,E)$ is a pair $(\mathcal{X},\mathcal{T})$, where $\mathcal{X} = (X_1, X_2, \ldots, X_m)$ is a family of subsets of $V$, and $\mathcal{T}$ is a tree (or forest) whose nodes are the subsets $X_i$, satisfying the following properties:
\begin{enumerate}
  \item Every $v \in V$ belongs to at least one $X_i$, i.e., $\bigcup_{i=1}^{m} X_i = V$.
  \item For every $(u,v) \in E$, there is an $X_i$ such that both $u$ and $v$ belong to $X_i$.
  \item For every vertex $v \in V$, the set of nodes in $\mathcal T$ associated with $v$ forms a connected subset of $\mathcal T$.
\end{enumerate}
The \emph{width of a tree decomposition} equals $\max_i|X_i| - 1$. The \emph{treewidth} of $G$ is defined as the minimum such width over all tree decompositions of $G$.

Graph families with bounded treewidth include $k$-outerplanar graphs, series-parallel graphs, cactus graphs, and pseudoforests.

\paragraph{The Bounded-Degree Model.} Before discussing our results, we describe the bounded-degree model introduced by Goldreich and Ron~\cite{GR_sparse} and used in this paper.
The degree of every vertex in the input graph is bounded by a constant $d$.
An algorithm can make two kinds of queries to access the input graph $G=(V,E)$.
First, for any vertex $v \in V$, it can obtain its degree $\deg(v)$ in constant time.
Second, for any vertex $v \in V$ and any $j$ such that $1 \le j \le \deg(v)$, the algorithm
can obtain the label of the $j$-th neighbor of $v$ in constant time.

The \emph{query complexity} of an algorithm $\mathcal{A}$ is the maximum number of queries made by $\mathcal{A}$ to the input graph. Also, the conventional \emph{time complexity} of $\mathcal{A}$ refers to maximum running time of $\mathcal{A}$. $\mathcal{A}$ is said to run in \emph{constant time} if its time complexity is independent of the number $n$ of vertices in the input graph.

\paragraph{Partitioning oracles.}
The main tool in (polynomial-time) approximation algorithms for minor-free graphs is the separator theorem \cite{LT_separator,LT80,AST90}, which shows a partition a graph into two components with a small number of edges connecting them. It is used to partition the original problem into independent subproblems, which are tractable. Stitching together the optimal solutions for each of the independent subproblems results in an approximate solution for the original problem.

In \cite{HKNO09}, this intuition was used to design constant-time algorithms for various problems. Fix a partition $P$ of the vertices in the input graph $G$ such that the following properties hold:
\begin{enumerate}
  \item Each connected component in the partition of the graph is small (say, has size $\poly(1/\epsilon)$).
  \item The number of edges connecting different connected components in the partition is less than $\eps|V|$.
\end{enumerate}
Suppose now that we are given query access to such a partition $P$, i.e., we have a procedure $\mathcal O$ which given a vertex $v$ returns the connected component $P(v)$ that contains $v$. We call such a procedure a \emph{partitioning oracle}. For a family of graphs $\mathcal F$, $\mathcal O$ is a partitioning oracle for $\mathcal F$ with parameter $\eps >0$ if it meets the following requirements:
\begin{itemize}
\item If $G \in \mathcal F$, then with probability $9/10$, the number of edges cut by the oracle is $\eps n$.

\item The oracle provides a partition of $G$, even if $G \not\in \mathcal F$.

\item The partition $P$, which $\mathcal O$ provides access to, is a function of only the input graph and random coin tosses of the oracle. In particular, $P$ cannot be a function of the queries to the oracle\footnote{This property allows algorithms to treat the partition $P$ as fixed, even if it is not explicitly computed for the entire graph until sufficiently many queries are performed.}.
\end{itemize}

We describe applications of partitioning oracles later, when we discuss the results that can be obtained using our oracle.

The main challenge here is to design efficient partitioning oracles that make few queries to the input graph and use little computation to answer every query about the partition.
\cite{HKNO09} shows how to design such an oracle for minor-free graphs (and also for some other hyperfinite graphs). However, their oracles have query complexity and running time of $2^{\poly (1/\epsilon)}$ (see~\cite{KOthesis} for a simple oracle with this property), and the main question left open by their paper is whether one can design a partitioning oracle that runs in time $\poly (1/\epsilon)$. In this paper we make partial progress by designing a partitioning oracle of complexity $\poly(1/\epsilon)$ for bounded-treewidth graphs.
A similar open question is posed in \cite{BSS08}, where they ask for a $\poly(1/\eps)$-time tester  for minor-closed properties. Constructing an efficient partitioning oracle for minor-free graphs would yield such a tester, but in general, such a tester need not be based on a~partitioning oracle.

\paragraph{Our Main Result.} The main result of the paper, an efficient partitioning oracle for bounded-treewidth graphs, is stated in the following theorem.

\begin{theorem}\label{theorem:partition-treewidth}
Let $G=(V,E)$ be a graph with maximum degree bounded by $d$.
Let $k$ be a positive integer.
There is an oracle $\mathcal O$ that given an $\eps \in (0,1/2)$,
and query access to $G$, provides query access to a function $f:V \to 2^V$ of the following properties (where $k = O\left(\frac{d^5 \cdot h^{O(h)} \cdot \log(d/\eps)}{\eps^3}\right)$):
\begin{enumerate}
\item For all $v \in V$, $v \in f(v)$.
\item For all $v \in V$, and all $w \in f(v)$, $f(v) = f(w)$.
\item If the treewidth of $G$ is bounded by $h$, then with probability $9/10$, $|\{(v,w) \in E: f(v) \ne f(w)\}| \le \eps|V|$.
\item For all $v \in V$, $|f(v)| \le k$.
\item For all $v$, the oracle 
makes $O(dk^{4h+7})$ queries to the input graph to answer a single query to the oracle, and the processing time is bounded by 
$\tilde O(k^{4h+O(1)} \cdot \log{Q})$, where $Q$ is the number of previous queries to the oracle.
\item The partition described by $f$ is a function of $G$ and random bits of the oracle, but does not depend on the queries to the oracle.
\end{enumerate}
\end{theorem}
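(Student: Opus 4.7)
The plan is to follow the permutation-based framework of Hassidim et al.\ (HKNO), replacing their exponential-time local subroutine with one tailored to the bounded-treewidth setting that uses only $\poly(1/\eps)$ queries. The global partition is defined implicitly: fix a random permutation $\pi$ of $V$ and process vertices in order of $\pi$; for each $v$ that has not yet been assigned, invoke a local routine \findneigh$(v)$ that returns a connected set $S_v \ni v$ of size at most $k$, and assign every still-unclaimed vertex of $S_v$ to the component $S_v$. The function $f$ output by the oracle is the one induced by this process, and a query $f(v)$ is answered by lazily simulating the process only in the part of $\pi$ that can affect $v$.

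The new ingredient is the design of \findneigh. Its goal, given $v$ and the current set $U$ of unclaimed vertices, is to find a connected $S \subseteq U$ with $v \in S$, $|S| \le k$, and small edge-boundary $|\{(x,y) \in E : x \in S, y \notin S\}| \le (\eps/c)\,|S|$ for a suitable constant $c$. Existence uses the tree-decomposition structure: any BFS ball around $v$ still has treewidth at most $h$, so it admits a separator of size at most $h+1$ that disconnects a piece of balanced size; iterating this argument $O(\log k)$ times produces a small set around $v$ whose boundary is proportionally small. Algorithmically, \findneigh grows balls $B_r$ around $v$ inside $U$ for $r = 0, 1, \dots, O(k)$, enumerates for each $B_r$ all $O(|B_r|^{h+1})$ candidate separators of size $h+1$, checks the $v$-component of $B_r \setminus T$, and returns it as soon as some $T$ produces a component meeting the size and boundary requirements. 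Since $|B_r| = (dk)^{O(1)}$, a single call to \findneigh costs $\poly(1/\eps) \cdot h^{O(h)}$ queries.

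With \findneigh in hand, a query $f(v)$ is answered by recursively determining, for every vertex $u$ with $\pi(u) < \pi(v)$ that could possibly claim $v$, whether $u$ was itself already claimed when processed. Only vertices within distance $O(k)$ of $v$ can claim $v$, so the branching factor is $O(dk)$ and the recursion depth is bounded by a function of $k$; combining with the per-call cost yields the claimed $O(dk^{4h+7})$ bound. Properties (1), (2), (4), and (6) are then immediate from the construction. The main obstacle is property (3): bounding the expected number of cut edges by $\eps n$. Following the HKNO blueprint, call $v$ \emph{good} if at the moment it is processed the induced subgraph on the unclaimed vertices in $v$'s $O(k)$-neighborhood still admits a qualifying low-conductance set around $v$, and charge each cut edge either to the $(\eps/c)$-fraction of good components' boundaries or to the $d$ edges incident to a bad endpoint. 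The hereditary closure of treewidth under vertex and edge deletion lets one decompose any induced subgraph along its own tree decomposition into pieces of size $\poly(1/\eps)$ with total boundary at most $(\eps/2)n$, from which a probabilistic argument over $\pi$ shows that in expectation at least $(1 - \eps/(2d))n$ vertices are good; Markov then gives the $9/10$ guarantee. The hardest piece is the quantitative match with $k = O(d^5 h^{O(h)} \log(d/\eps)/\eps^3)$: one must carefully balance the per-iteration shrinkage factor of the separator recursion, the $O(\log(d/\eps))$ iterations needed to push the boundary below the target, and the conductance threshold $\eps/c$, and argue that the $h^{O(h)}$ blowup from enumerating separators of size $h+1$ in each bag is absorbed only once rather than compounded across iterations.
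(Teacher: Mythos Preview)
Your proposal has a genuine gap in the local simulation, and it is precisely the gap that the paper's main contribution is designed to close.

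You let \findneigh$(v)$ depend on the current set $U$ of unclaimed vertices, and then simulate a query $f(v)$ by ``recursively determining, for every vertex $u$ with $\pi(u)<\pi(v)$ that could possibly claim $v$, whether $u$ was itself already claimed.'' That recursion is exactly the HKNO simulation, and its cost is not $O(dk^{4h+7})$: to decide whether $u$ was already claimed you must recurse on the $\Theta(dk)$ potential claimers of $u$, and so on; the recursion tree has branching factor $\Theta(dk)$ and depth that is itself a function of $k$, giving $2^{\poly(1/\eps)}$, not $\poly(1/\eps)$. Your sentence ``combining with the per-call cost yields the claimed $O(dk^{4h+7})$ bound'' does not follow from the preceding sentence; the claimed exponent $4h+7$ cannot be produced by a recursion whose depth is independent of $h$.

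The paper's way out is to make $S_v$ a function of the \emph{original} graph $G$ only, not of the running set $U$: \findneigh$(v,k,\delta,2(h{+}1))$ searches for a connected $S\ni v$ in $G$ with $|N(S)|\le 2(h{+}1)$ (not merely a small separator inside a ball), by guessing the at most $2(h{+}1)$ boundary vertices one at a time and running BFS after each deletion, for cost $k^{O(h)}$. Because $S_v$ no longer depends on the processing order, the marker of $q$ is simply the $u$ of minimum $r_u$ among all $u$ with $q\in S_u$; this set of candidates can be listed directly in $O(dk^{4h+6})$ queries, and one more pass over $S_u$ gives the flat $O(dk^{4h+7})$ bound with no recursion at all. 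The price is that property~(3) can no longer be argued by ``good at processing time'': the fixed sets $S_u$ overlap, and a vertex can be touched by many markers. The paper controls this with a separate combinatorial lemma showing that any family of connected sets through $v$, each with $|N(\cdot)|\le c$, has a cover of size at most $(c{+}1)!$; hence the expected number of \emph{marking} $u$'s covering $v$ is $O((2h{+}3)!\log k)$, and summing $\delta d|S_u|$ over markers gives the $\eps|V|$ bound. Your outline contains neither the order-independent definition of $S_v$ nor this cover argument, and without them the polynomial query bound does not hold.
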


\paragraph{Applications.} Partitioning oracles have numerous applications described in \cite{HKNO09}.
Let us describe some general applications of partitioning oracles, and the results yielded by our efficient partitioning oracle.
\begin{itemize}

\item {\bf Testing minor-closed properties:} In \emph{property testing} of graphs with maximum degree bounded by $d = O(1)$, the goal is to distinguish graphs that have a~specific property $P$ from those that need to have at least $\eps dn$ edges added and removed to obtain the property $P$, where $\eps > 0$ is a parameter.

Goldreich and Ron~\cite{GR_sparse} show that the property of being a tree can be tested in
$\tilde O(\eps^{-3})$ time. Benjamini, Schramm, and Shapira~\cite{BSS08} prove that any minor-closed property can be tested in $2^{2^{2^{\poly(1/\eps)}}}$ time. Hassidim {\it et al.}~\cite{HKNO09} introduce partitioning oracles and show how to use them to obtain a tester of running time $2^{\poly(1/\eps)}$ (see~\cite{KOthesis} for a~simplified full proof). Yoshida and Ito~\cite{YI10} show that outerplanarity and the property of being a cactus can be tested in $\poly(1/\eps)$ time.

Via the reduction from~\cite{HKNO09},
our new oracle yields a $\poly(1/\eps)$-time tester for any minor closed family of graphs that has bounded treewidth.
Sample minor-closed families of graphs with this property are $k$-outerplanar graphs, series-parallel graphs, and pseudoforests.
This also generalizes the result of Yoshida and Ito~\cite{YI10}, since outerplanar graphs have treewidth at most 2. 

\item {\bf Constant-time approximation algorithms:}
Our oracle can also be used to obtain a $\poly(1/\eps)$-time additive $\eps n$-approximation
algorithm for the size of the maximum matching, minimum vertex cover, and minimum dominating set
in (even unbounded degree) graphs with constant treewidth. See~\cite{HKNO09} for a general reduction.
An important fact here is that for bounded-treewidth graphs, there are linear-time algorithms for computing the exact solutions to these problems~\cite{ArnborgP89}.
This result adds to a long line of research on this kind of approximation algorithm
\cite{PR07,MR09,NO08,YYI09,HKNO09,Elek10,NS11}.

\item {\bf Testing properties of bounded-degree
bounded-treewidth graphs:}
Czumaj, Shapira, and Sohler~\cite{CSS} show that any hereditary property of bounded-degree bounded-treewidth graphs can be tested in constant time. This result is generalized by Newman and Sohler~\cite{NS11}, who show that in fact any property of such graphs can be tested in constant time. Unfortunately, these two papers do not yield very efficient algorithms in general. Using our oracle and another reduction from \cite{HKNO09}, one can show that there are $\poly(1/\eps)$-time algorithms for testing $k$-colorability and graph perfectness for these graphs. As before, one has to use efficient polynomial-time algorithms for these problems \cite{ArnborgP89,ChudnovskyCLSV05} to solve the exact decision problems for sampled components in the reduction from~\cite{HKNO09}.

\end{itemize}

\subsection{Overview of Our Techniques}

Let us briefly describe the main ideas behind the proof of our main result.
Let $G$ be a bounded-degree graph with treewidth $h$. We say that a vertex in $G$ has a~``good neighborhood'' if a small set $S$ of vertices including $v$ can be disconnected from the graph by deleting at most $O(h)$ other vertices. Moreover, $O(h)/|S|$ is small. 

First we show that most vertices have a good neighborhood. This follows by taking a tree decomposition of $G$, and showing a method that constructs a~partition in which most vertices end up in connected components that can play the role of $S$ for them in the original graph.

Then using the fact that a good neighborhood of small size $t$ has a small border, i.e., of size $O(h)$, we show a procedure for enumerating all good neighborhoods for a given vertex. The procedure runs in $\poly(dt)^{O(h)}$ time, where $t$ is a bound on the size of the neighborhood.  In particular, it can be used to check whether a given vertex has a good neighborhood and find it, if it exists.

Finally, we show a global partitioning algorithm that is likely to compute the desired partition of the graph. In this algorithm, each vertex $v$ computes an arbitrary good neighborhood $S_v$ containing it. If such a neighborhood does not exist, we set $S_v := \{v\}$ instead. Then we consider all veritices in $V$ in random order. In its turn, $v$ removes all the remaining vertices in $S_v$ from the graph. The set of vertices removed by $v$ constitutes one (or a constant number) of the connected components in the partition of the input graph. This algorithm can easily be simulated locally and is the basis of our partitioning oracle. 

\paragraph{Note:} An anonymous reviewer suggested using known results on tree-partition-width \cite{DingOporowski1995,Wood2009} to simplify our proofs. Those results can be used to give a simpler proof of a slightly worse decomposition than that in Lemma~\ref{lem:treewidthlocalcut}.
Namely, one can immediately combine them with Lemma~\ref{lem:treecut2} to obtain a partition of a bounded-treewidth graph in which almost most vertices belong to components that have neighborhoods of size at most $O(dh)$ as opposed to $O(h)$ in Lemma~\ref{lem:treewidthlocalcut}.
For constant $d$, the worse decomposition still results in the oracle's query complexity of $(1/\eps)^{\poly(h)}$. Unfortunately, in some application, for instance in the approximation of the minimum vertex cover size in graphs of arbitrary degree, $d = O(1/\eps)$ and the weaker partition eventually results in an algorithm that runs in $2^{\poly(1/\eps)}$ time. Our construction gives a $\poly(1/\eps)$-time algorithm.
\section{Definitions}

Let $G = (V,E)$ be a graph and $S$ be a subset of $V$. We write $N(S)$ to denote the set of vertices that are not in $S$, but are adjacent to at least one vertex in $S$. We write $\eta(S)$ to denote the \emph{cut-size} of $S$, which is defined as the size of $N(S)$, $\eta(S) = |N(S)|$. We write $\phi(S)$ to denote the \emph{vertex conductance} of $S$, which is defined as  $\phi(S) = \frac{\eta(S)}{|S|}$.

\begin{definition}
Let $G = (V,E)$ be a graph. We say that $S \subseteq V$ is a \emph{neighborhood} of $v$ in $G$ if $v \in S$ and the subgraph induced by $S$ is connected. Given $k, c \geq 1$ and $\delta \in (0,1)$, we say that $S$ is a $(k,\delta,c)$\emph{-isolated neighborhood} of $v \in V$ if $S$ is neighborhood of $v$ in $G$, $|S| \leq k$, $\eta(S) \leq c$ and $\phi(S) \leq \delta$.
\end{definition}

\begin{definition}
Let $G = (V,E)$ be a graph and let $A$ be a family of sets of vertices in $G$. A subfamily $B \subseteq A$ is a \emph{cover} of $A$ if for every set $T \in A$, $ T \subseteq \bigcup_{S \in B}{S}$.
\end{definition}

\section{Local Isolated Neighborhoods in Bounded-Treewidth Graphs}

The following lemma is at the heart of our proof. It shows that given a bounded-treewidth and bounded-degree graph, we can find an isolated neighborhood of $v$ for almost every vertex $v$ in the graph.

\newcommand{\const}{28860}
\begin{lemma}\label{lem:treewidthlocalcut}
Let $G=(V,E)$ be a graph with treewidth bounded by $h$ and maximum degree bounded by $d$.
For all $\eps, \delta \in (0,1/2)$, there exists a function $g:V \to 2^V$ with the following properties:
\begin{enumerate}
\item For all $v \in V$, $v \in g(v)$.
\item For all $v \in V$, $|g(v)| \leq k$, where $k = \frac{\const\, d^3  (h+1)^5 }{\delta \eps^2}$.
\item For all $v \in V$, $g(v)$ is connected.
\item Let $\mathcal B$ be the subset of $V$ consisting of $v$ such that
$g(v)$ is a $\left(k,\delta,2(h+1)\right)$-isolated neighborhood of $v$ in $G$. The size of $\mathcal{B}$ is at least $(1-\eps/20)|V|$.
\end{enumerate}
\end{lemma}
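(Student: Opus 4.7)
The plan is to construct $g$ from a random partition of the tree decomposition of $G$ into small ``path-like'' subtrees, each isolated from the rest of the tree by at most two edges, and let $g(v)$ be the connected component of $v$ in the interior of the subtree that contains it. Start from a tree decomposition $(\mathcal{X},\mathcal{T})$ of width $h$, normalized so that $\mathcal{T}$ has maximum degree $3$ and each bag has size exactly $h+1$ (at a constant blow-up in the number of bags). Canonically decompose the edges of $\mathcal{T}$ into a collection of paths by, at each degree-$3$ bag, pairing two of the three incident edges into one passing path and treating the third as starting a new path. On each path, pick an independent uniformly random offset $r \in \{0,\ldots,B-1\}$ and cut the path at every $B$-th edge starting from position $r$. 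The connected components $\mathcal{T}_1,\mathcal{T}_2,\ldots$ of $\mathcal{T}$ that result are the chunks: each is a connected subtree containing at most $B$ bags, and each is separated from its complement in $\mathcal{T}$ by at most two edges.

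For each chunk $\mathcal{T}_j$, let $A_j = \bigcup_{(X,Y) \in \partial \mathcal{T}_j}(X \cap Y)$ (so $|A_j|\le 2(h+1)$) and set $S_j = \bigl(\bigcup_{X \in \mathcal{T}_j} X\bigr) \setminus A_j$. A short argument using the connectedness of each $\mathcal{T}_v$ shows that $v \in S_j$ iff $\mathcal{T}_v \subseteq \mathcal{T}_j$; consequently, for any connected component $C$ of the induced subgraph $G[S_j]$, every external neighbor of $C$ in $G$ lies in $A_j$. Define $g(v)$ to be the connected component of $v$ in $G[S_j]$ when $v \in S_j$ (for the unique such $j$), and $g(v)=\{v\}$ otherwise. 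Then $v \in g(v)$, $g(v)$ is connected, $|g(v)| \le |S_j| \le B(h+1) \le k$ for $B$ chosen so that $B(h+1) \le k$, and $\eta(g(v)) \le |A_j| \le 2(h+1)$ whenever $g(v)$ arises as such a component.

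The core of the proof is the bound $|\mathcal{B}| \ge (1-\eps/20)|V|$. A vertex $v$ fails to land in $\mathcal{B}$ in one of two ways: type (i), $v \in \bigcup_j A_j$, so $\mathcal{T}_v$ straddles a segmentation cut; or type (ii), $v \in S_j$ but its component in $G[S_j]$ has fewer than $2(h+1)/\delta$ vertices, making $\phi(g(v))>\delta$. For type (i), the random-shift argument gives $\Pr[v \in \bigcup_j A_j] \le |\mathcal{T}_v|/B$, which summed over $v$ (using $\sum_v |\mathcal{T}_v| = \sum_i |X_i| \le (h+1)|\mathcal{X}|$) yields an $O((h+1)|V|/B)$ expected count. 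For type (ii), any small component of $G[S_j]$ with a nonzero external boundary must be adjacent in $G$ to a vertex of $A_j$, so each chunk has at most $d|A_j| \le 2d(h+1)$ such components, each of size less than $2(h+1)/\delta$; small components with no external boundary are entire $G$-components and already satisfy $\phi=0$, so they contribute nothing to $V \setminus \mathcal{B}$. Summing and balancing the two error terms against $\eps|V|/40$ each, then converting the expectation bound into an existential one via Markov, pins down the choice of $B$ and gives the quoted bound on $k$.

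The main obstacle is the type-(ii) analysis: although $A_j$ is small by design, the induced subgraph $G[S_j]$ can in principle fragment into many components, and bounding the total mass of bad small components requires combining the tree-decomposition separator with the bounded degree of $G$ (to limit how many components can attach to a given boundary set) and the randomness of the segmentation (to prevent a cut from chopping a large component into many small pieces). This is the step that drives the dependence of $k$ on $d^3$ and a high power of $h+1$, while the path-decomposition plus random-shift framework is what is responsible for the clean cut-size bound $\eta(g(v)) \le 2(h+1)$ --- an improvement over the $O(dh)$ bound one would get from the tree-partition approach mentioned in the Note.
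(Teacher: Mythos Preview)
There is a genuine gap. The central claim that each chunk $\mathcal T_j$ has at most two boundary edges (and at most $B$ bags) does not follow from the path-decomposition-plus-random-shift construction. Take $\mathcal T$ to be a caterpillar: a main path $X_0-X_1-\cdots-X_n$ where each $X_i$ has a long pendant path attached. Pairing the two main-path edges at every $X_i$ makes the main path a single passing path and each pendant its own path. After cutting at spacing $B$, the connected component of $\mathcal T$ containing a stretch $X_j,\ldots,X_{j+B-1}$ also contains an initial segment of each of those $B$ pendants (up to the first random cut on that pendant), giving up to $B+2$ boundary edges and $\Theta(B^2)$ bags. More extremely, in a complete binary tree of depth $D<B$, every path in your edge-decomposition has length at most $D$, so no cut is ever made and the entire tree is one chunk. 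Both the bound $|A_j|\le 2(h+1)$ and the bound $|S_j|\le B(h+1)$ therefore fail, and these are precisely what you invoke to get $\eta(g(v))\le 2(h+1)$ and $|g(v)|\le k$.

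The paper confronts exactly this obstacle and resolves it differently. It first shows that $G$ has a tree decomposition whose tree $\mathcal T$ has degree at most $d(h+1)$, and then proves a dedicated forest-partition lemma (Lemma~\ref{lem:treecut2}): in any bounded-degree forest one can partition the vertices so that all but an $\eps$-fraction lie in parts of cut-size at most $2$, conductance at most $\delta$, and bounded size. The construction is not a random shift. It first contracts small hanging subtrees into their parents until every leaf has weight at least $k'$; this forces the number of leaves, and hence the number of branching vertices, down to $O(|\mathcal T|/k')$. Each branching vertex is then removed as a singleton part (sacrificed into the $\eps$-bad set), after which only disjoint paths remain and are chopped greedily. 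In other words, the cut-size-two guarantee is obtained by \emph{deliberately discarding} the few branch points rather than hoping they do not accumulate inside a chunk. If you want to salvage your approach, you would need an analogous preprocessing step that makes branching bags rare before applying any path-segmentation.
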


Before showing the proof for Lemma~\ref{lem:treewidthlocalcut}, we need a few results on forest partitioning and bounded-treewidth graphs.

\subsection{A Strong Partition for Forests}
The following lemma is a strong version of the Lemma~\ref{lem:treewidthlocalcut} for trees and forests (treewidth is at most $2$). We show that for any forest $T$, there is a partition of $T$ such that the cut-sizes of most of the parts in the partition are smaller or equal to $2$.

\begin{lemma} \label{lem:treecut2}
Let $T=(V_T,E_T)$ be a forest with maximum degree bounded by $d \geq 2$. Let $\eps,\delta \in (0,1/2)$, and let $k = \frac{481 d^2}{\delta \eps}$. There exists a partition $f:V \to 2^V$ with the following properties.
\begin{enumerate}
\item For all $v \in V$, $v \in f(v)$.
\item For all $v \in V$, and all $w \in f(v)$, $f(w) = f(v)$.
\item For all $v \in V$, $|f(w)| \le k$.
\item For all $v \in V$, the subgraph of $T$ induced by $f(v)$ is connected.
\item Let $\mathcal C$ be the subset of $V_T$ consisting of $w$ such that 
$f(w)$ is a $(k,\delta,2)$-isolated neighborhood of $w$ in $T$. The size of $\mathcal{C}$ is at least $(1-\eps/60)|V|$.
\end{enumerate}
\end{lemma}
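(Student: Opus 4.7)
The plan is to root each tree of $T$ arbitrarily and construct $f$ by a single post-order DFS that maintains a pending \emph{accumulator} $A(v)$ at each vertex, starting from $A(v)=\{v\}$ and absorbing $A(c)$ for each child $c$ whose accumulator has not yet been finalized. As soon as $|A(v)|$ first reaches a threshold $\tau$ (to be chosen as $\Theta(d/(\delta\eps))$), the accumulator is \emph{finalized} as a component and the edge $(v,\mathrm{parent}(v))$ becomes a cut. Any whole tree of size at most $k$ is taken as a single component with $\eta=0$, and root accumulators that end up smaller than $\tau$ are merged with an adjacent finalized child component so that every component has size at least $\tau$, at the cost of raising the size upper bound by a small constant factor.

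Structural conditions 1--4 will be immediate from the construction: $v\in A(v)$, each $A(v)$ is connected because merges combine adjacent connected subsets (induction on post-order), $f$ is a valid partition, and every finalized accumulator has size in $[\tau,\,1+d(\tau-1)]\subseteq[\tau,d\tau]$. Setting $k:=d\tau$ (up to a constant to absorb root-merging) with $\tau\ge 2/\delta$ takes care of the size bound and guarantees $\phi(C)\le 2/\tau\le\delta$ for any component with $\eta(C)\le 2$. A short case analysis will show that, for a finalized component $C$ with top vertex $v$, $\eta(C)$ equals (one, for the edge to $v$'s parent, if any) plus the number of children of $C$ in the \emph{component tree} whose nodes are the finalized components and whose edges are the cut edges of $T$. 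Consequently, the bad set $\mathcal B$ consists exactly of vertices whose component has at least two children in the component tree.

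The hard part---and the main obstacle---is bounding $|\mathcal B|\le\eps|V|/60$. A crude bound ``(number of branching components)\,$\times$\,(maximum component size)'' only gives $O(d|V|)$, which is far too weak. My plan is to refine the DFS by ordering $v$'s children by the size of their pending accumulators and absorbing them in a size-prioritized way, so that the component tree is kept as path-like as possible. Then each branching component $C$ can be charged to its $\ge 2$ child components in the component tree; each such child witnesses a disjoint subtree of $T$ of size at least $\tau$ that is not charged by any other branching component. Combined with the observation that accumulators can grow by at most a factor of $d$ between merges---so branchings in the component tree occur with geometrically decreasing frequency in depth---summing this charge yields $|\mathcal B|=O(d|V|/\tau)$. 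Choosing $\tau=\Theta(d/(\delta\eps))$ with constants tuned to also ensure $\tau\ge 2/\delta$ gives both $|\mathcal B|\le\eps|V|/60$ and $k=d\tau=\Theta(d^2/(\delta\eps))$, matching the stated $k=\const\,d^2/(\delta\eps)$ after constant optimization.

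The delicate point is establishing the charging argument under a suitable tie-breaking rule: without care, bad components can pile up in locally branching regions of $T$ and the charges overlap, so one needs a rule (such as merging smaller pending accumulators first and isolating large ones as their own components) under which the witness subtrees for distinct bad components are genuinely disjoint. Making this rule explicit, and verifying that the resulting partition satisfies the recursive geometric decay in the component tree, is where the bulk of the technical work will lie.
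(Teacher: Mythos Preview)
Your plan diverges from the paper's, and the step you flag as the hard part is indeed a genuine gap: you never actually prove $|\mathcal B|=O(d|V|/\tau)$. The crude bound (number of branching components times maximum component size) gives only $O(d|V|)$; to beat it you appeal to a ``geometrically decreasing frequency'' of branchings and a size-prioritized absorption order, but neither is turned into an inequality on $\sum_{C\text{ bad}}|C|$. Concretely, consider a spine $v_1,\ldots,v_m$ where each $v_i$ has $d-3$ leaf children and two children that root subtrees of size exactly $\tau$: in your construction every spine component has size $\approx\tau$ and $\eta\approx 2\tau/(d-2)$, so all spine-plus-leaf vertices are bad and $|\mathcal B|/|V|\approx (d-2)/(2\tau)$. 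This is consistent with your claimed bound but shows the ratio $|C|/|\partial^-(C)|$ can be $\Theta(d)$, so a naive per-child charge of $|C|/|\partial^-(C)|$ to each child component recovers only the crude bound. Making the charging disjoint (your ``witness subtrees'') across nested bad components is exactly what you have not done. There is also an internal tension: ``finalize as soon as $|A(v)|$ first reaches $\tau$'' implies checking after each absorption (giving size $<2\tau$ but orphaning later unfinalized siblings), while your stated size bound $1+d(\tau-1)$ implies absorbing all children first; you need to commit to one and handle its side effects.

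The paper avoids this entire difficulty by a cleaner construction. Phase~1 repeatedly contracts any leaf of weight $<k'=\Theta(d/(\delta\eps))$ into its neighbor; afterwards every remaining leaf has weight $\ge k'$, so the contracted forest has at most $|V|/k'$ leaves and hence at most $|V|/k'$ vertices of degree $>2$. Phase~2 then makes each such high-degree vertex a \emph{singleton} component and turns each of its contracted branches into its own component. The only components with cut-size $>2$ are those singletons, so the number of vertices in cut-size-$>2$ components is at most $|V|/k'\le\eps|V|/480$---no charging argument needed. After removing the high-degree vertices the graph is a disjoint union of paths, which Phase~3 partitions greedily into pieces of weight $\ge 2/\delta$ (hence conductance $\le\delta$) and cut-size $\le 2$. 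A short case count then bounds the remaining high-conductance vertices (small branches at high-degree vertices and leftover path ends). The key idea you are missing is to isolate branching vertices as singletons rather than let them sit inside large components; this collapses your ``hard part'' to a one-line count.
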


\begin{proof}
\begin{algorithm}[ht]
\caption{\tt Stronger-Tree-Partitioning}
\label{alg:stronger-tree-partition}
$k' := \frac{480 d}{\delta \eps}$ \\
$G := T$\\

\tcc*[h]{Phase 1: Contract leaves of weight less than $k'$}\\
\lForAll{\rm vertex $v$}{$s[v] := \{v\}$\\}
\While {\rm there exists a vertex $v$ of degree 1 such that $|s[v]| < k'$} {
Let $u$ be the neighbor of $v$.\\
$s[u] := s[u] \cup s[v]$\\
Remove $v$ from G.
}

\tcc*[h]{Phase 2: Remove vertex with degree greater than 2}\\
\ForEach {vertex $v$ of degree greater than 2}  {
\ForEach{$u \in s[v]$ such that $(u,v) \in E_T$} { 
\lForEach{$w \in s[u]$}{$f(w) := s[u]$}
}
$f(v) := \{ v \}$\\
Remove $v$ from G.
}

\tcc*[h]{Phase 3: Partitioning paths}\\
\While {there exists a vertex $v$ in $G$ with degree 1} {
\If {$|s[v]| \geq 2/\delta$} {
\lForAll {$w \in s[v]$} {$f[w] := s[v]$} \\
Remove $v$ from $G$.
}
\Else {
Let $u$ be the only neighbor of $v$ in $G$. \\
$s[u] := s[u] \cup s[v]$ \\
Remove $v$ from $G$
}
}

\tcc*[h]{Phase 4: Partitioning isolated vertices}\\
\While {there exists an isolated vertex $v$ } {
\lForAll{$w \in s[v]$}{$f[w] := s[v]$ }\;
Remove $v$ from $G$.
}
\end{algorithm}
Consider the partitioning method described in Algorithm~\ref{alg:stronger-tree-partition}.
We show that it produces the desired partition.
The method consists of four parts:
\begin{enumerate}
	\item {\bf Steps 3--7:} Small branches are shrunk. For each vertex $v \in V_G$, $s[v]$ is the set of vertices in $V_T$ that are contracted to $v$.
	\item {\bf Steps 8--12:} In this part, the algorithm partitions the set of vertices contracted to vertices degree higher than $2$. For each vertex of degree higher than $2$, all the vertices in each branch that is shrunk to that vertex are turned into a single component in the partition and the vertex is turned into a component of size $1$. Once all the vertices with degree higher than $2$ are removed, the remaining graph contains only disjoints paths and isolated vertices.
	\item {\bf Steps 13--20:} In this part, the algorithm partitions the paths in the remainining  graph. It repeatedly looks for end vertices of the paths. For each end vertex, if its weight is less $2/\delta$, the vertex is contracted to its only neighbor. Otherwise, if the vertex's weight is at least $2/\delta$, all the vertices that are contracted to that vertex are turned into a single component in the partition.
	\item {\bf Steps 21--23:} Finally, for each remaining isolated vertex, the set of vertices that are contracted to that vertex are turned into a single component in the partition.
\end{enumerate}

Clearly, from the construction of the function $f$, Claims 1, 2 and 4 of the Lemma \ref{lem:treecut2} hold.

Let us bound the size of each component in the partition constructed by Algorithm~\ref{alg:stronger-tree-partition}. First, observe that for each component in the partition, all the vertices in that component are contracted to the same vertex. Also, observe that the weight of a vertex in $T$ is a most $d \cdot \max \{k', 2/\delta\} + 1 \le \frac{480d^2}{\delta\eps} + 1 \le k$. Therefore, the size of each component in the partition is also bounded by $k$, as required by Claim~3.

Finally, we show that $|\mathcal C| \ge (1-\eps/60)|V|$:
\begin{itemize}

	\item Let us bound the number of vertices in $T$ that belong to components with cut-size greater than $2$. Observe that for any $v \in V_T$, the cut-size of $f(v)$ is only greater than $2$ if $v$ is also a vertex of degree higher than $2$ in Phase 2 of Algorithm~\ref{alg:stronger-tree-partition}. Since after the first part of Algorithm~\ref{alg:stronger-tree-partition}, the weight of each leaf vertex in $G$ is at least $k'$, the number of leaves in $G$ at Step 8 is at most $\frac{1}{k'}|V|$. This also implies that the number of vertices with degree higher than $2$ in $G$ at Step 8 is bounded by $\frac{1}{k'}|V| \leq \frac{\eps}{480}|V|$.

	\item We now bound the number of vertices in $T$ that belong to components of conductance greater than $\delta$. Given $v \in V_T$, observe that $\phi(f(v))$ is only greater than $\delta$ in one of the following cases:
\begin{itemize}
	\item[-] $v$ is a vertex of degree higher than $2$ in Step 8 of Algorithm~\ref{alg:stronger-tree-partition}. There are at most $\frac{1}{k'}|V|  \leq \frac{\eps}{480}|V|$ vertices of this type.
	\item[-] $v$ is contracted to a vertex $u$, and $u$ is a vertex of degree higher than $2$ in Step 8. In this case, $\phi(f(v)) > \delta$ if and only if the size of the branch that was shrunk to $u$ and contains $v$, is less than $1/\delta$. For each high degree vertex $u$, there are at most $d$ such small branches. Therefore, the total number of vertices in $T$ of this type is at most $\frac{d}{\delta k'}|V| \leq \frac{\eps}{480}|V|$.
	\item[-] $v$ is contracted to a vertex $u$ such that $u$ is the last vertex that remains after partitioning a path in $G$ in Step 13 and the weight of $u$ is smaller than $2/\delta$. Observe that the number of paths in $G$ in Step 13 is bounded by twice the number of leaves in the forest in Step~8. Therefore, the number of such vertices $v$ is bounded by $\frac{2}{k'}|V| \cdot \frac{2}{\delta} \le \frac{\eps}{120}|V|$.
\end{itemize}
  Summarizing, the number of vertices in $T$ belonging to components of conductance greater than $\delta$ is at most $\frac{6}{480}\eps|V|$.
\end{itemize}
By the union bound, there are at least $(1-\eps/60)|V|$ vertices in $T$ that belong to components of conductance at most $\delta$ and of cut-size at most $2$. Combining this with the Claim 3 of the lemma, we have $|\mathcal C| \geq (1-\eps/60)|V|$, as stated in Claim 5.
\end{proof}

\subsection{Some Properties of the Tree Decomposition}

We first introduce a few helpful definitions.

\begin{definition}
Let $(\mathcal{X},\mathcal{T})$ be a tree decomposition of $G = (V,E)$, where $\mathcal{X} = (X_1, X_2, \cdots, X_m)$ is a family of subsets of $V$ and $\mathcal{T}$ is a forest whose nodes are the subsets $X_i$.
\begin{enumerate}
  \item We say $(\mathcal{X},\mathcal{T})$ is \emph{edge-overlapping} if for any $X_i, X_j \in \mathcal{X}$ such that $(X_i,X_j) \in \mathcal{T}$, $X_i \cap X_j \neq \emptyset$.
  \item We say $(\mathcal{X},\mathcal{T})$ is \emph{minimal} if for all $X_i \in \mathcal{X}$ and all $x \in X_i$, removing $x$ from $X_i$ makes $(\mathcal{X},\mathcal{T})$ no longer a tree decomposition of $G$.
  \item We say $(\mathcal{X},\mathcal{T})$ is \emph{non-repeated} if no $X_i = \emptyset$ and
  there is no $(X_i, X_j) \in \mathcal{T}$ such that $X_i \subseteq X_j$.

\item Let $\mathcal T'$ be a subforest of $\mathcal T$. We write $\mathcal X|_{\mathcal{T}'}$ to denote the set of nodes in $\mathcal T'$.

\item Let $\mathcal S$ be a subset of $\mathcal X$. We write $V|_{\mathcal S}$ to denote $\bigcup_{Z \in \mathcal S}{Z}$. 

\item Let $\mathcal T'$ be a subforest of $\mathcal T$, we write $V|_{\mathcal T'}$ to denote $V|_{\mathcal X|_{\mathcal T'}}$ (= $\bigcup_{Z \in \mathcal X|_{\mathcal{T}'}}{Z}$).

\item Let $v$ be a vertex in $V$, we say that $v$ \emph{appears} in a subset $\mathcal S \subseteq \mathcal X$ if $v \in V|_{\mathcal S}$, and $v$ \emph{appears} in a subforest $\mathcal T'$ of $\mathcal T$ if $v \in V|_{\mathcal T'}$.

\item Let $e = (u,v) \in E$ be an edge in $G$, we say that a subset $\mathcal S \subseteq \mathcal X$ \emph{witnesses} $e$ if there exists a node $Z \in \mathcal S$ such that $u,v \in Z$. Similarly, we say that a subforest $\mathcal T'$ of $\mathcal T$ \emph{witnesses} $e$ if there exists a node $Z \in \mathcal X|_{T'}$ such that $Z$ contains both $u$ and $v$.

\end{enumerate}
\end{definition}

The next three lemmas provide simple properties of tree decompositions.

\begin{lemma}\label{lem:canonical-tree-decomposition}
Let $G = (V,E)$ be a graph with treewidth $h$. There is a tree decomposition of $G$ of width $h$ that is edge-overlapping, minimal, and non-repeated.
\end{lemma}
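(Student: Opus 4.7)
The plan is to start from any width-$h$ tree decomposition $(\mathcal X,\mathcal T)$ of $G$ and repeatedly apply local ``cleanup'' operations until all three desired conditions hold, while preserving the fact that it is still a tree decomposition of width at most $h$. I would introduce four operations: \textbf{(a)} shrink a bag, i.e., remove a vertex $x$ from some $X_i$ if doing so still yields a tree decomposition; \textbf{(b)} delete an empty bag $X_i=\emptyset$ from $\mathcal X$ (and from $\mathcal T$); \textbf{(c)} split a non-overlapping edge, i.e., if $(X_i,X_j)\in\mathcal T$ with $X_i\cap X_j=\emptyset$, delete this edge from $\mathcal T$ (which may turn $\mathcal T$ into a forest, which the definition permits); and \textbf{(d)} contract a subset edge, i.e., if $(X_i,X_j)\in\mathcal T$ with $X_i\subseteq X_j$, delete $X_i$ from $\mathcal X$ and reattach its other $\mathcal T$-neighbors directly to $X_j$.

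The first verification step is that each of (a)--(d) preserves the three axioms of a tree decomposition and never increases $\max_i|X_i|$. Operation (a) is sound by construction, and (b) drops a node carrying no vertex. For (c), the key observation is that any vertex $v$ whose bag-subtree uses the edge $(X_i,X_j)$ must lie in both $X_i$ and $X_j$ by axiom~3, so $X_i\cap X_j=\emptyset$ rules this out and removing the edge cannot disconnect any $v$-subtree. For (d), I would split into the cases $v\in X_i$ (then $v\in X_j$, and $X_j$ becomes the new hub of $v$'s subtree after rewiring) and $v\notin X_i$ (then $v$'s subtree was already confined to a single component of $\mathcal T\setminus\{X_i\}$, so adding new edges at $X_j$ cannot disconnect it).

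Next I would argue termination using the lexicographic potential $\Phi=\bigl(\sum_i|X_i|,\ |\mathcal X|,\ |\mathcal T|\bigr)$. Operations (a) and (d) strictly decrease the first coordinate (for (d) we may assume $X_i\ne\emptyset$, since empty subset bags are handled by (b)); operation (b) leaves the first coordinate unchanged and decreases the second; operation (c) leaves the first two unchanged and decreases the third. Since $\Phi$ is a triple of non-negative integers, the process must terminate. At a terminal configuration none of (a)--(d) is applicable, which is exactly the conjunction of minimality, non-emptiness plus absence of subset edges (together, the ``non-repeated'' condition), and edge-overlapping.

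The main obstacle I anticipate is the careful verification of operation (d): one must argue that after deleting $X_i$ and rerouting its other $\mathcal T$-neighbors through $X_j$, the subtree of bags containing each vertex $v$ remains connected in the modified tree. The case analysis above handles this using only axiom~3 of the original decomposition. A secondary subtlety is that operations (a)--(d) interact (for example, splitting a non-overlapping edge can expose a vertex that is now safe to shrink out of some bag), but the lexicographic potential shows that applying them in any order, iterating until nothing is applicable, converges to a decomposition that satisfies all three properties simultaneously.
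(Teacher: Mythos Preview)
Your proposal is correct and follows essentially the same approach as the paper: start from an arbitrary width-$h$ decomposition, repeatedly apply the four local cleanup operations (shrink a removable vertex, delete an empty bag, cut a non-overlapping tree edge, contract a subset edge), and argue termination via a decreasing potential. The only cosmetic difference is that the paper uses the scalar potential $|\mathcal X|+|\mathcal T|+\sum_i|X_i|$ rather than your lexicographic one, and it leaves implicit the axiom-preservation checks that you spell out.
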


\begin{proof}
Let $(\mathcal{X},\mathcal{T})$ be a tree decomposition of $G$ of width $h$. 
We repeat the following operations in arbitrary order as long as we can.
\begin{enumerate}
\item If there exists an empty node $X_i$, remove it.
  \item If there exists a pair of nodes $X_i, X_j \in \mathcal{X}$ such that $(X_i,X_j) \in \mathcal{T}$ and $X_j \subseteq X_i$, remove $X_j$ and connect all neighbors of $X_j$ to $X_i$.
	\item If there exists a pair of nodes $X_i, X_j \in \mathcal{X}$ such that $(X_i,X_j) \in \mathcal{T}$ and $X_i \cap X_j = \emptyset$, remove the edge $(X_i,X_j)$ from $\mathcal{T}$.
	\item If there exists a node $X_i \in \mathcal X$ and a vertex $x_i \in X_i$ such that $(\mathcal{X},\mathcal{T})$ is still a tree decomposition of $G$ after removing $x_i$ from $X_i$, then remove $x_i$ from $X_i$.
\end{enumerate}
The process has to stop because in every iteration, $|\mathcal X|+|\mathcal T|+\sum_i|X_i|$ decreases. When the process stops, the tree decomposition is
by definition edge-overlapping, minimal, and non-repeated.
\end{proof}

\begin{lemma} \label{lem:treewidth-size}
Let $G = (V,E)$ be a graph and let $(\mathcal{X},\mathcal{T})$ be a non-repeated tree decomposition of $G$ of width $h$. Let $\mathcal T'$ be a subtree of $\mathcal{T}$.
$$\frac{\Bigl|V|_{\mathcal{T}'}\Bigr|}{h+1} \leq \Bigl|\mathcal X|_{\mathcal{T}'}\Bigr| \leq \Bigl|V|_{\mathcal{T}'}\Bigr|.$$
\end{lemma}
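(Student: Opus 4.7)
The plan is to prove the two inequalities separately. The left inequality is immediate from the width bound: each bag has at most $h+1$ vertices, so $\bigl|V|_{\mathcal{T}'}\bigr| \le \sum_{Z \in \mathcal{X}|_{\mathcal{T}'}} |Z| \le (h+1)\cdot\bigl|\mathcal{X}|_{\mathcal{T}'}\bigr|$. The rest of the effort goes into the right inequality $\bigl|\mathcal{X}|_{\mathcal{T}'}\bigr| \le \bigl|V|_{\mathcal{T}'}\bigr|$.

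For the right inequality, I will exhibit an injection $\psi \colon \mathcal{X}|_{\mathcal{T}'} \to V|_{\mathcal{T}'}$. Root $\mathcal{T}'$ at an arbitrary node $X_r$. Because the decomposition is non-repeated, $X_r \neq \emptyset$, so I can pick any $\psi(X_r) \in X_r$. For every non-root bag $X_i$, let $X_{p(i)}$ denote its parent in the rooted $\mathcal{T}'$; the non-repeated property gives $X_i \not\subseteq X_{p(i)}$, so I can pick $\psi(X_i) \in X_i \setminus X_{p(i)}$. The image of $\psi$ is clearly contained in $V|_{\mathcal{T}'}$.

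The main step is to verify injectivity. Suppose toward a contradiction that $\psi(X_i) = \psi(X_j) = v$ for distinct bags $X_i, X_j$, and without loss of generality take $X_j$ to be no closer to the root than $X_i$; in particular $X_j \neq X_r$, so $X_{p(j)}$ is defined. By the third tree-decomposition axiom, the set of bags in $\mathcal{T}$ containing $v$ forms a subtree of $\mathcal{T}$, and since the intersection of two subtrees of a tree is itself a subtree, the bags of $\mathcal{T}'$ containing $v$ are connected in $\mathcal{T}'$. Hence the unique path from $X_i$ to $X_j$ in $\mathcal{T}'$ lies entirely inside bags containing $v$. Since $X_j$ lies no higher than $X_i$, this path reaches $X_j$ through $X_{p(j)}$, forcing $v \in X_{p(j)}$ and contradicting the choice $\psi(X_j) \in X_j \setminus X_{p(j)}$.

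I do not foresee any serious obstacle: the argument is essentially the standard trick of charging each bag to a vertex that is ``first introduced'' there in a root-to-leaf traversal. The only subtlety is treating the root cleanly, which is precisely why both halves of the non-repeated hypothesis are needed --- non-emptiness of bags handles $X_r$, and the absence of parent-child containment handles every other bag. The WLOG choice of $X_j$ as the deeper endpoint is what guarantees $X_{p(j)}$ exists and lies on the $X_i$-$X_j$ path.
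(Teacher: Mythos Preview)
Your proof is correct and follows essentially the same approach as the paper. The paper establishes the right inequality by induction on $|\mathcal{X}|_{\mathcal{T}'}|$, peeling off a leaf $Y$ and using the non-repeated property to find a vertex in $Y$ absent from its neighbor (hence, by the connectedness axiom, absent from the rest of $\mathcal{T}'$); your rooted injection is the non-inductive repackaging of exactly this idea, with the parent playing the role of the leaf's neighbor.
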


\begin{proof}
The first inequality is straightforward. Since each vertex in $V|_{\mathcal{T}'}$ appears at least once in $\mathcal X|_{\mathcal{T}'}$ and each set in $\mathcal X|_{\mathcal{T}'}$ contains at most $h+1$ vertices, $\Bigl|\mathcal X|_{\mathcal{T}'}\Bigr| \geq \frac{\bigl|V|_{\mathcal{T}'}\bigr|}{h+1}$.

The second inequality can be proved by induction on the size of $\mathcal{X}(\mathcal{T}')$.
If $\mathcal{T}'$ is an isolated node $Y$ in $\mathcal T$, then $\Bigl|\mathcal X|_{\mathcal{T}'}\Bigr| = 1 \leq \Bigl|V|_{\mathcal{T}'}\Bigr| = |Y|$, where the last inequality holds because no $Y \in \mathcal X$ is empty.

For the inductive case, consider a subtree $\mathcal{T}'$ of $\mathcal T$ of size $k > 1$.
Let $Y \in \mathcal X|_{\mathcal{T}'}$ be a leaf vertex in $\mathcal{T}'$. By the induction hypothesis,
\begin{align}
\Bigl|X|_{\mathcal T'}\Bigr| - 1 = \Bigl|\mathcal X|_{\mathcal{T}'} \setminus \{Y\}\Bigr| \leq \left|\bigcup_{Z \in \mathcal X|_{\mathcal{T}'} \setminus \{Y\}}{Z}\right|. \label{tag1}
\end{align}
Let $Y'$ be $Y$'s neighbor in $\mathcal{T}'$. Since $(\mathcal{X},\mathcal{T})$ is non-repeated, there is a vertex $v \in V$ such that $v \in Y \setminus Y'$. 
By the definition of the tree decomposition, the set $\{ Z \in \mathcal X : v \in Z \}$ is a connected component of $\mathcal{T}$. Therefore, for every $Z \in \mathcal X|_{\mathcal{T}'} \setminus \{ Y \}$, $v \notin Z$. 
This implies that
\begin{align}
\left|\bigcup_{Z \in \mathcal X|_{\mathcal{T}'} \setminus \{ Y \} }{Z}\right| \leq \Bigl|V|_{\mathcal{T}'} \setminus \{v\}\Bigr| = \Bigl|V|_{\mathcal{T}'}\Bigr| - 1 \label{tag2}.
\end{align}
From (\ref{tag1}) and (\ref{tag2}), we have $\Bigl|\mathcal X|_{\mathcal{T}'}\Bigr| \leq \Bigl|V|_{\mathcal{T}'}\Bigr|$, which finishes the inductive proof.
\end{proof}

\begin{lemma} \label{lem:treewidth-degree}
Let $G = (V,E)$ be a graph with maximum degree bounded by $d$ and let $(\mathcal{X},\mathcal{T})$ be its edge-overlapping and minimal tree decomposition of width $h$. The maximum degree of a node in $\mathcal{T}$ is bounded by $d(h + 1)$.
\end{lemma}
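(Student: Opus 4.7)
The plan is to fix an arbitrary node $X_i \in \mathcal{X}$ and bound its degree in $\mathcal{T}$ via a double-counting argument. Since the decomposition is edge-overlapping, every $\mathcal{T}$-neighbor $X_j$ of $X_i$ contains at least one vertex of $X_i$; so if for each $v \in X_i$ I let $d_v$ denote the number of $\mathcal{T}$-neighbors of $X_i$ whose bag contains $v$, then $|N_{\mathcal{T}}(X_i)| \le \sum_{v \in X_i} d_v$. Because $|X_i| \le h+1$, it suffices to show that $d_v \le d$ for every $v \in X_i$, which reduces the whole claim to a per-vertex statement.

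To bound $d_v$, I would work with the subtree $S_v = \{Z \in \mathcal{X} : v \in Z\}$, which is connected by the third property of tree decompositions. The $\mathcal{T}$-neighbors of $X_i$ that contain $v$ are exactly the $S_v$-neighbors of $X_i$, so $d_v$ equals the degree of $X_i$ inside $S_v$. An elementary tree fact then guarantees that $S_v$ contains at least $d_v$ leaves distinct from $X_i$: if $X_i$ is internal in $S_v$, then removing it produces $d_v$ subtrees each contributing at least one leaf; and if $X_i$ is itself a leaf of $S_v$, then $d_v \le 1$ and, since $S_v$ must contain at least one other node (the unique $S_v$-neighbor of $X_i$ when $d_v = 1$), there is still at least one leaf different from $X_i$.

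The key step is to extract a distinct neighbor of $v$ in $G$ from each such leaf $L$, using minimality. The removal of $v$ from $L$ cannot violate property~1, because $v \in X_i$ keeps $v$ represented in the decomposition; and it cannot violate property~3, because $L$ is a leaf of $S_v$, so $S_v \setminus \{L\}$ stays connected. Minimality therefore forces a violation of property~2: there must be an edge $(v, w_L) \in E$ for which $L$ is the unique bag containing both endpoints, which in particular forces $w_L \in L$. If two different leaves $L \neq L'$ yielded the same $w_L = w_{L'}$, then both $L$ and $L'$ would be bags containing that edge's endpoints, contradicting the uniqueness condition for both. Thus the $w_L$'s are pairwise distinct neighbors of $v$ in $G$, giving $d_v \le \deg_G(v) \le d$ and hence the desired bound $|N_{\mathcal{T}}(X_i)| \le d(h+1)$.

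I expect the main obstacle to be the careful corner-case analysis around the structure of $S_v$: specifically, handling $d_v = 0$ and $d_v = 1$ separately to guarantee that at least $d_v$ leaves of $S_v$ differ from $X_i$, and checking that at each such leaf the minimality witness really is the edge condition rather than the other two possibilities. Once these case distinctions are laid out, everything else is bookkeeping from the definitions of tree decomposition, edge-overlapping, and minimal.
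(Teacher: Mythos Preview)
Your proof is correct. The paper organizes the argument differently: rather than bounding each $d_v$ via the leaves of $S_v$, it deletes the node $Y=X_i$ from $\mathcal{T}$, takes the resulting subtrees (one per neighbor $Y'$), and uses minimality at the overlap vertex $x\in Y\cap Y'$ to assign to each subtree an edge of $G$ incident to some vertex of $Y$ that is witnessed only in that subtree; injectivity of this assignment gives the $d(h+1)$ bound in one stroke. Your per-vertex route through the subtrees $S_v$ and their leaves is a little longer but is more explicit about exactly where minimality bites (at a leaf $L$ of $S_v$, where removing $v$ can only break property~2), and your injectivity step for the $w_L$'s is completely transparent. The paper's version is terser and leaves to the reader the verification that the edge it assigns to each subtree is genuinely not witnessed elsewhere.
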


\begin{proof}
Let $Y \in \mathcal X$ be a node in $\mathcal{T}$. Consider a neighbor $Y'$ of $Y$. Since $(\mathcal{X},\mathcal{T})$ is edge-overlapping, there must exist some $x \in V$ such that $x \in Y \cap Y'$. In addition, consider the subtree $\mathcal T'$ containing $Y'$ that is created by removing $Y$. Since $(\mathcal{X},\mathcal{T})$ is minimal, $\mathcal T'$ must witness some edge $(x,y) \in E$, for $y \in V$, that is not witnessed by any other subtree created by removing $Y$. Therefore, we can assign to each new subtree created by removing $Y$ a unique edge incident to a vertex in $Y$. The degree of $Y$ in $\mathcal T$ equals exactly the number of the new subtrees that the connected component of $Y$ breaks into after removing $Y$. Since there are at most $h+1$ vertices in $Y$, each of degree at most $d$, $Y$ has at most $d(h+1)$ neighbors in $\mathcal T$.
\end{proof}

\subsection{Proof of Lemma~\ref{lem:treewidthlocalcut}}

We are finally ready to prove the main lemma of this section.

\begin{proof}[Proof of Lemma \ref{lem:treewidthlocalcut}]
By Lemma \ref{lem:canonical-tree-decomposition}, there exists an edge-overlapping, minimal and non-repeated tree decomposition $(\mathcal{X},\mathcal{T})$ of $G$ of width $h$. By Lemma \ref{lem:treewidth-degree}, $\mathcal{T}$ is a forest with maximum degree bounded by $d \cdot (h+1)$. By Lemma \ref{lem:treecut2}, with the lemma's $\eps$ set to $\eps/(h+1)$, $d$ set to $d(h+1)$ and $\delta$ set to $\frac{\delta \eps}{60d(h+1)}$, there exists a partition $f : \mathcal X \to 2^{\mathcal X}$ such that:
\begin{enumerate}
	\item For all $X \in \mathcal{X}$, $X \in f(X)$.
	\item For all $X \in \mathcal{X}$, and all $Y \in f(X)$, $f(Y) = f(X)$.
	\item For all $X \in \mathcal{X}$, $|f(X)| \le \frac{\const\, d^3  (h+1)^4 }{\delta \eps^2}$.
	\item For all $X \in \mathcal{X}$, the subgraph of $\mathcal T$ inducted by $f(X)$ is connected.
	\item The size of $\mathcal C$, the subset of vertices in $\mathcal X$ such that for every $X \in \mathcal C$, $f(X)$ is a~$\left(\frac{\const\, d^3  (h+1)^4 }{\delta \eps^2},\frac{\delta\eps}{60d(h+1)},2\right)$-isolated neighborhood of $X$ in $T$, is at least $(1-\eps/60(h+1))|\mathcal X|$.
\end{enumerate}
Let $\mathcal P$ be the set of all components in the partition $f$. To simplify the notation, we say that a component $P$ in $\mathcal P$ is \emph{good} if $P$ is a $\left(\frac{\const d^3  (h+1)^4 }{\delta \eps^2},\frac{\delta}{60d(h+1)},2\right)$-isolated neighborhood. Otherwise, $P$ is \emph{bad}. Let $\mathcal P\good$ be the set of good components in $\mathcal P$.
Similarly, we define $\mathcal P\bad = \mathcal P \setminus \mathcal P\good$ as the set of bad components in $\mathcal P$. Let $A\bad = \bigcup_{P \in \mathcal P\bad}{V|_{P}}$, i.e., $A\bad$ is the set of vertices in $V$ that appear in at least one bad component of the partition. Let $A\gneigh = \bigcup_{P \in \mathcal P\good}{V|_{N(P)}}$ is the set of vertices in $V$ that appear in the neighborhood of at least one good component\footnote{Recall that $N(P)$ denotes the set of nodes in $\mathcal T$ that are not in $P$, but are adjacent to at least one vertex in $P$.}. Finally, let $A\good = V \setminus (A\bad \cup A\gneigh)$ be the set of vertices in $V$ such that for every $v \in A\good$, $v$ only appears in exactly one component in $\mathcal P$ and that component is good.

Let us construct the function $g$ as follows. For each $v \in V$, if $v \notin A\good$, set $g(v)$ to $\{v\}$. Otherwise, if $v \in A\good$, set $g(v)$ to the connected component that contains $v$ in the subgraph of $G$ induced by $A\good$.

It is clear from the construction that Claims 1 and 3 of Lemma \ref{lem:treewidthlocalcut} hold for $g$. Let us bound the size of $g(v)$. If $v \notin A\good$, then it is clear from the definition of $g$ that $|g(v)| = 1$. Otherwise, if $v \in A\good$, let $P$ be the only component in which
$v$ appears. It is clear from the construction of $g$ that $V|_{N(P)} \cap A\good = \emptyset$. Therefore, $g(v) \subseteq V|_{P}$. 
Thus, $|g(v)| \leq \Bigl|V|_{P}\Bigr| \leq (h+1)|P| \leq \frac{\const\, d^3  (h+1)^5 }{\delta \eps^2}$. In both the cases, the size of $g(v)$ is bounded by $\frac{\const\, d^3  (h+1)^5 }{\delta \eps^2}$, as stated in Claim 2 of the lemma.

Finally, we now show that $|\mathcal B| \geq (1-\frac{\eps}{20})|V|$. Consider a vertex $v \in A\good$. Let $P$ be the only component in $\mathcal P$ in which $v$ appears. Observe that every edge going out of $g(v)$ must end up in $V|_{N(P)}$. Therefore,
$$\eta(g(v)) \leq V|_{N(P)} \leq (h+1) |N(P)| \leq 2(h+1). $$

Let $A\hcond$ be the subset of $A\good$ consisting of $v$ such that  $\phi(g(v)) > \delta$. Also, let $A\lcond = A\good \setminus A\hcond$. For all $v \in A\lcond$, $g(v)$ is a $(k,\delta,2(h+1))$-isolated neighborhood of $v$ in $G$, that is, $A\lcond \subseteq \mathcal B$. It therefore suffices to lower-bound the size of $A\lcond$, which we do next.

Observe first that for each bad component $P \in \mathcal P$, all nodes in $P$ belong to $\mathcal{X} \setminus \mathcal{C}$,
	$$ |A\bad| \leq \left|\bigcup_{Z \in \mathcal X \setminus \mathcal C}{Z}\right| \leq (h+1)\cdot |\mathcal X \setminus \mathcal C| \leq \frac{\eps}{60} |\mathcal X| \leq \frac{\eps}{60}|V|,$$
	where the last inequality follows from Lemma~\ref{lem:treewidth-size}.
Observe next that for each good component $P \in \mathcal P$, $|N(P)| \leq \frac{\delta\eps}{60d(h+1)} |P|$. Thus, $$\Bigl|V|_{N(P)}\Bigr| \leq (h+1) \cdot  |N(P)| \leq \frac{\delta\eps}{60d} |P|.$$
	This implies that $|A\gneigh| \leq \frac{\delta\eps}{60d} |V| \leq \frac{\eps}{60} |V|$.
Note now that since every edge leaving $g(v)$ goes to $A\gneigh$, the total number of edges leaving $g(v)$ for all $v \in A\good$ is at most $d \cdot |A\gneigh|$. Thus, $$|A\hcond| \le \frac{d}{\delta} \cdot \left|A\gneigh\right| \leq \frac{\eps}{60}\cdot|V|.$$
Finally, we obtain
\begin{eqnarray*}
|\mathcal B| &\geq&  |A\lcond| = |A\good| - |A\hcond| \\
&\ge& |V| - |A\bad| - |A\gneigh| - |A\hcond| \geq \left(1-\frac{\eps}{20}\right)|V|.
\end{eqnarray*}
\end{proof}

\section{Isolated Neighborhoods}\label{sec:find-isolated-neighborhood}

In this section we show how to discover isolated neighborhoods efficiently. We also prove an upper bound on the number of incomparable isolated neighborhoods covering a specific vertex.

\subsection{Finding an Isolated Neighborhood of a Vertex}

The following lemma states that isolated neighborhoods with small cut-size in bounded-degree graphs can be found efficiently. To this end, we use the procedure \findneigh{} described as Algorithm~\ref{proc:local-bounded-cut-neighborhood}.

\begin{lemma}\label{lem:find-isolated-neighborhood}
Let $G = (V,E)$ be a graph with maximum degree bounded by $d$. Given a vertex $v \in V$, integers $k,c \geq 1$ and $\delta \in (0,1)$, procedure \findneigh{} finds a $(k,\delta,c)$-isolated neighborhood of $v$ in $G$, provided it exists.
If no such isolated neighborhood exists, the algorithm returns $\{v\}$. The algorithm runs in $\poly(dk) \cdot k^{c}$ time and makes $O(dk^{c+1})$ queries to the graph.
\end{lemma}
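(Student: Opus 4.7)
The plan is to analyze \findneigh{} as a bounded depth-first search that maintains a pair $(S, C)$ of disjoint sets and grows it one vertex at a time, starting from $(\{v\}, \emptyset)$. Fix a deterministic rule (for concreteness, the BFS-from-$v$ discovery order) for selecting the next unresolved boundary vertex $u \in N(S) \setminus C$ whenever one exists. At state $(S, C)$ the algorithm behaves as follows: if $N(S) \subseteq C$, test whether $|S| \le k$, $|C| \le c$, and $\phi(S) \le \delta$, and if so return $S$; if $|S| > k$ or $|C| > c$, return failure; otherwise select $u$ via the canonical rule and recursively try both $(S \cup \{u\}, C)$ and $(S, C \cup \{u\})$, returning the first successful result. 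If the entire search fails, output $\{v\}$.

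For correctness, suppose $S^*$ is a $(k,\delta,c)$-isolated neighborhood of $v$ and let $C^* = N(S^*)$. I would argue by induction on $|S| + |C|$ that the branch which routes each selected $u$ to $S$ whenever $u \in S^*$ and to $C$ whenever $u \in C^*$ maintains the invariants $S \subseteq S^*$ and $C \subseteq C^*$, and therefore never triggers the pruning conditions. A key sub-step is checking that the canonical rule always picks a vertex that actually lies in $S^* \cup C^*$: indeed, $u \in N(S)$ means $u$ is adjacent to some $x \in S \subseteq S^*$, so either $u \in S^*$ or $u \in N(S^*) = C^*$. When this branch halts, the stopping condition $N(S) \subseteq C$ is first reached precisely at $(S^*, C^*)$, and the conductance and size tests all pass, so $S^*$ is reported.

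For the complexity analysis, the core observation is that along any root-to-leaf path in the pruned search tree, $|S| + |C|$ strictly increases, so the depth is at most $k + c$, and the number of ``add to $C$'' moves along the path is at most $c$. Because the branching vertex at each node is chosen deterministically by the canonical rule, each distinct path is uniquely encoded by the set of positions at which it chose the ``add to $C$'' branch, giving at most $\binom{k+c}{c} = O(k^c)$ leaves after pruning, and hence $O(k^c)$ nodes in the pruned tree (each internal node has at most two children). Each node does $\poly(dk)$ bookkeeping work: the dominant per-step cost is inspecting the $O(d)$ neighbors of the newly added vertex to update the boundary $N(S) \setminus C$, which takes $O(d)$ queries per node. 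Multiplying through yields the claimed $O(d k^{c+1})$ query and $\poly(dk) \cdot k^c$ time bounds.

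The main obstacle I anticipate is tightening the tree-size count so that the canonical branching rule interacts cleanly with pruning; once the invariant $S \subseteq S^*$, $C \subseteq C^*$ above is established, the rest of the argument is a combinatorial bookkeeping on the shape of the pruned search tree and a careful per-node query accounting. A minor extra care point is that a cut vertex inserted into $C$ at an earlier step remains a genuine boundary vertex at the leaf (it retains a neighbor in the only-growing set $S$), which ensures $C = N(S)$ at any successful leaf and hence that the final conductance test is the right one.
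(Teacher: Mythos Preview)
Your proposal describes and analyzes a correct procedure, but it is not the paper's \findneigh{} (Algorithm~\ref{proc:local-bounded-cut-neighborhood}). The paper's procedure is coarser: it runs a BFS from $v$ until $k$ vertices are visited, checks whether the visited set already is a $(k,\delta,c)$-isolated neighborhood, and if not, observes that at least one of the $k$ visited vertices must lie in $N(S^*)$; it then branches over which visited vertex to delete and recurses with parameter $c-1$. Correctness is immediate from the observation that a BFS confined to $S^*$ cannot reach $k$ vertices without first crossing some vertex of $N(S^*)$. Complexity follows because the recursion has depth $c$ with branching factor $\le k$, giving $O(k^c)$ calls, each doing one BFS with $O(dk)$ queries.

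Your vertex-by-vertex growth with binary branching (``put $u$ in $S$'' vs.\ ``put $u$ in $C$'') is a genuinely different, finer-grained search. Your correctness invariant $S\subseteq S^*,\ C\subseteq C^*$ is clean, and your observation that $C=N(S)$ at any successful leaf is exactly what is needed to certify the returned set. The one soft spot is the leaf count: a root-to-leaf path can contain up to $c+1$ ``add to $C$'' moves (the last such move triggers the $|C|>c$ prune), so the bound is $\binom{k+c+1}{c+1}=O(k^{c+1})$ leaves rather than $\binom{k+c}{c}$; this does not hurt your final $O(dk^{c+1})$ query claim, but your intermediate ``$O(k^c)$ nodes, $O(d)$ queries each'' line is inconsistent with the conclusion you state. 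Compared to the paper, your approach trades a shallow $k$-ary recursion for a deep binary one; the paper's version has a one-line correctness argument, while yours makes the role of the cut set $C$ more explicit and arguably generalizes more readily.
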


\begin{proof}
\begin{algorithm}[t]
\caption{Procedure \findneigh($v$,$k$,$\delta$,$c$)}
\label{proc:local-bounded-cut-neighborhood}
Run BFS from $v$ until it stops or exactly $k$ vertices are visited \\
Let $S$ be the set of vertices reached by the BFS \\
\If {$S$ is a $(k,\delta,c)$-isolated neighborhood in the original graph} {\Return $S$}
\If {$c > 0$} {
\ForEach {$w \in S \setminus \{v\}$} {
  Remove $w$ from the graph \\
  $S'$ $:=$ \findneigh($v$,$k$,$\delta$,$c-1$) \;
  Insert $w$ back into the graph \\
  \lIf{$S' \ne \{v\}$}{\Return $S'$}}
}
\Else {
\Return $\{v\}$
}
\end{algorithm}
The procedure uses brute force search to identify a $(k,\delta,c)$-isolated neighborhood $S$ of $v$ by guessing vertices in $N(S)$. In each execution, the procedure runs breadth-first search (BFS) to explore the vertices around $v$ until $k$ vertices are discovered or the BFS execution terminates. The latter happens when the size of the connected component containing $v$ is less than $k$.
Next the procedure checks if the set of visited vertices is a $(k,\delta,c)$-isolated neighborhood of $v$. If it is, the procedure outputs the set and terminates. Otherwise, if the set is not a $(k,\delta,c)$-isolated neighborhood, at least one of the vertices in the set must belong to $N(S)$. The procedure guesses a vertex $u$ (or in fact, enumerates over all possibilities), removes it from the graph and recursively searches for $S$ again. Since the neighborhood of $S$, if $S$ exists, has cut-size bounded by $c$, i.e., $|N(S)| \leq c$, the procedure has to recurse at most $c$ times to guess the entire set $|N(S)|$. This proves that the procedure discovers a $(k,\delta,c)$-isolated neighborhood of $v$, if it exists. If it does not then the procedure clearly returns $\{v\}$.

Since at each iteration there are at most $k$ possible guesses for a vertex in $N(S)$, the procedure is executed at most $O(k^c)$ times. Each execution can easily be implemented in $\poly(dk)$ time with at most $O(dk)$ queries in the BFS execution. Therefore, the total running time is $\poly(dk) \cdot k^c$ and the total query complexity is bounded by $O(dk^{c+1})$.
\end{proof}

\subsection{Finding Isolated Neighborhoods Covering a Vertex}

Let $v$ and $u$ be vertices in a graph $G=(V,E)$. When the values of $k, \delta, c$ are clear from context, we say that $u$ \emph{covers} $v$ if the isolated neighborhood found by \findneigh($u$,$k$,$\delta$,$c$) contains $v$. The following lemma states that one can efficiently find all vertices that cover a given vertex.

\begin{lemma}\label{lem:find-covering}
Let $G = (V,E)$ be a graph with maximum degree bounded by $d$. There is an algorithm that given a vertex $v \in V$, integers $k,c \geq 1$ and $\delta \in (0,1)$, finds all $u$ that cover $v$
in $\poly(cdk) \cdot k^{2c}$ time and with $O(dk^{2(c+1)})$ queries.
\end{lemma}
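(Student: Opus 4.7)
The plan is a two-step algorithm: first enumerate every $(k,\delta,c)$-isolated neighborhood of $v$ to obtain a short list $U$ of candidate covering vertices, then verify each candidate by running \findneigh{} from it.

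For Step~1, I would modify \findneigh{} so that it does not stop at the first valid neighborhood discovered, but instead traverses its entire recursion tree, recording every $(k,\delta,c)$-isolated neighborhood of $v$ that the BFS step encounters. The recursion tree has depth at most $c$ and branching factor at most $k$, so it performs at most $O(k^c)$ BFS executions, each costing $\poly(dk)$ time and $O(dk)$ queries. Denote the collected family by $\mathcal{N}$ and let $U = \bigcup_{S \in \mathcal{N}} S$. Since every $S \in \mathcal{N}$ has $|S| \le k$, we have $|U| \le k^{c+1}$.

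In Step~2, for every $u \in U$ I would invoke \findneigh$(u,k,\delta,c)$ (using Lemma~\ref{lem:find-isolated-neighborhood}) and place $u$ in the output iff the returned set contains $v$. Each invocation costs $\poly(dk) \cdot k^c$ time and $O(dk^{c+1})$ queries, so aggregating over $U$ yields $\poly(cdk) \cdot k^{2c}$ time and $O(dk^{2(c+1)})$ queries, matching the stated bound. Correctness in one direction is immediate by construction; in the other, if $u$ covers $v$ then \findneigh$(u,k,\delta,c)$ returns a $(k,\delta,c)$-isolated neighborhood $S$ containing $v$, hence $S$ is also a valid isolated neighborhood of $v$ and is enumerated in Step~1, so $u \in S \subseteq U$.

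The main obstacle will be to certify that Step~1 really enumerates every $(k,\delta,c)$-isolated neighborhood of $v$. The key point is that whenever $S$ is such a neighborhood with $N(S) \neq \emptyset$, any BFS from $v$ in $G$ (or in $G$ with some vertices of $N(S)$ already removed) that leaves $S$ must first step into a vertex of $N(S)$, so as long as the current BFS set is not equal to $S$ it must contain at least one not-yet-removed vertex of $N(S)$, which the recursion will try to peel off. After at most $|N(S)| \le c$ such peelings the BFS produces exactly $S$ and records it; when $N(S) = \emptyset$, $S$ is an entire small connected component and is captured at the root of the recursion.
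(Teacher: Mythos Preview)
Your proposal is correct and follows essentially the same approach as the paper: run a non-terminating variant of \findneigh{} from $v$ to list all $(k,\delta,c)$-isolated neighborhoods of $v$, take the union as the candidate set $U$ of size $O(k^{c+1})$, and then verify each $u\in U$ by running \findneigh$(u,k,\delta,c)$. Your last paragraph spells out the inductive reason why every isolated neighborhood of $v$ is eventually hit along some recursion branch, which the paper leaves implicit (it is the same argument underlying Lemma~\ref{lem:find-isolated-neighborhood}); otherwise the two proofs coincide.
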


\begin{proof}
We modify \findneigh($v$,$k$,$\delta$,$c$) so that it does not terminate after finding an isolated neighborhood. Instead it just outputs it, and continues. This way it eventually lists all isolated $(u,k,\delta)$-neighborhoods containing $v$. Therefore, if $u$ covers $v$, then $u$ must be visited in the execution of the modified \findneigh. This implies that there are at most $O(k^{c+1})$ vertices $u$ that belong to an isolated neighborhood common with $v$,
and can therefore cover $v$. We can discover them in $\poly(dk) \cdot k^{c}$ time with $O(dk^{c+1})$ queries. Next we run \findneigh($v$,$k$,$\delta$,$c$) for each of them to find these that in fact cover $v$. We also remove the repetitions from the list. The total running time is $\poly(cdk) \cdot k^{2c}$ and the total number of queries is $O(dk^{2(c+1)})$.
\end{proof}

\subsection{Small Cover of Isolated Neighborhoods}

Recall that a cover for a family $\mathcal A$ of subsets of vertices is any subset $\mathcal B \subseteq \mathcal A$ such that $\bigcup_{T \in \mathcal A} T = \bigcup_{T \in \mathcal B} T$. We now show that a family of isolated neighborhoods of a vertex has a relatively small cover.

\begin{lemma}\label{lem:small_cover}
Let $\mathcal A$ be a family of isolated neighborhoods of a vertex $v$ in a graph $G=(V,E)$. Let the cut-size of all the neighborhoods be bounded by an integer $c$. There is a cover $\mathcal B \subseteq \mathcal A$ of size at most $(c+1)!$.
\end{lemma}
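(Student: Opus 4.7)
The plan is to argue by induction on the cut-size parameter $c$, establishing $|\mathcal{B}| \le (c+1)!$ via the recursion $(c+1)! \ge 1 + c \cdot c!$. The base case $c = 0$ is immediate: if $N(T) = \emptyset$ for every $T \in \mathcal{A}$, then each $T$ is a union of connected components of $G$, and since $T$ is connected and contains $v$, it must equal the connected component of $v$. Hence $\mathcal{A}$ has at most one distinct element and $|\mathcal{B}| \le 1 = 1!$.

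For the inductive step, I would pick $T^* \in \mathcal{A}$ of maximum cardinality, include it in $\mathcal{B}$, and set $W = N(T^*)$ with $|W| \le c$. Any $T \in \mathcal{A}$ with $T \not\subseteq T^*$ must contain some vertex of $W$, because the path inside $T$ from $v \in T \cap T^*$ to any vertex of $T \setminus T^*$ must cross the boundary of $T^*$. The remaining task is, for each $w \in W$, to cover $\mathcal{A}_w = \{T \in \mathcal{A} : w \in T\}$ using at most $c!$ elements of $\mathcal{A}_w$.

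The core of the argument---and the main obstacle---is to recurse on each $\mathcal{A}_w$ in the graph $G^- = G[V \setminus T^*]$ with cut-size parameter reduced to $c - 1$. I would replace each $T \in \mathcal{A}_w$ by $T^w$, the connected component of $w$ in the induced subgraph $G[T \setminus T^*]$. Then $T^w$ is a connected neighborhood of $w$ in $G^-$, and I need $|N_{G^-}(T^w)| \le c - 1$. First, $N_{G^-}(T^w) \subseteq N_G(T) \setminus T^*$: any $u \in N_{G^-}(T^w) \cap T$ would lie in $T \setminus T^*$ and be adjacent in $G$ to $T^w$, hence would already belong to the same component of $T \setminus T^*$ as $w$, contradicting $u \notin T^w$. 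Second, $|N_G(T) \cap T^*| \ge 1$: the maximality of $T^*$ together with $w \in T \setminus T^*$ forces $T^* \not\subseteq T$ (otherwise $T \supseteq T^*$ would give $|T| \ge |T^*|$, forcing equality and thus $T = T^*$, contradicting $w \in T \setminus T^*$), and then the connectedness of $T^*$ with $v \in T \cap T^*$ produces some vertex of $T^*$ in $N_G(T)$. Combining the two observations yields $|N_{G^-}(T^w)| \le c - 1$.

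Applying the inductive hypothesis to $\{T^w : T \in \mathcal{A}_w\}$ in $G^-$ with cut-size $c - 1$ gives, for each $w \in W$, at most $c!$ original sets $T_{w,1}, \ldots, T_{w,k_w} \in \mathcal{A}_w$ whose components $T_{w,i}^w$ cover $\bigcup_{T \in \mathcal{A}_w} T^w$. I add these lifted sets to $\mathcal{B}$. Coverage follows from the decomposition $T = (T \cap T^*) \cup \bigcup_{w \in W \cap T} T^w$ valid for every $T \in \mathcal{A}_w$ (since each component of $T \setminus T^*$ must contain some vertex of $W \cap T$): the first piece is absorbed by $T^* \in \mathcal{B}$, and each remaining piece by some $T_{w,i} \supseteq T_{w,i}^w$. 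The final count is $|\mathcal{B}| \le 1 + c \cdot c! \le (c+1)!$, completing the induction.
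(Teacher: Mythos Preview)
Your proof is correct and follows essentially the same inductive structure as the paper's: pick a distinguished set, recurse on the connected components of each remaining $T$ after deleting that set, and use the fact that at least one vertex of $N_G(T)$ is swallowed by the distinguished set to drop the cut-size bound by one. The only difference is cosmetic: you select $T^*$ of maximum cardinality (so that no $T$ with $T\setminus T^*\neq\emptyset$ can contain $T^*$), whereas the paper first prunes $\mathcal{A}$ until some vertex $u$ lies in a unique set $S$ and uses $S$; both devices secure the same key property that the distinguished set is not contained in the other $T$, and the rest of the two arguments are identical.
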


\begin{proof}
We assume that $\mathcal A$ is non-empty. Otherwise, the lemma holds trivially. We prove the lemma by induction on $c$. For $c=0$, any isolated neighborhood of $v$ is the connected component containing $v$. Therefore, $\mathcal A$ consists of one set, $(c+1)!=1$, and the lemma holds.

For the inductive step, assume that $c > 0$ and that the lemma holds for cut-size bounds lower than $c$. Without loss of generality, there is a vertex $u\in V$ that belongs to exactly one set $S \in \mathcal A$. Otherwise, we can keep removing arbitrary sets from $\mathcal A$, one by one, until this becomes the case, because every cover for the pruned family of neighborhoods is still a cover for the original family.

For all $T \in \mathcal A$, let $G_T$ be the subgraph of $G$ induced by the vertices in $T \setminus S$. For each $w \in N(S)$, we construct a family $\mathcal A_w$ of neighborhoods of $w$ as follows. We consider each $T \in \mathcal A$. If $w \in T$, we add to $\mathcal A_w$ the set $X$ of vertices in the connected component of $G_T$ that contains $w$. In this case, we say that $T$ was the \emph{progenitor} of $X$.

Let $G'$ be the subgraph of $G$ induced by $V \setminus S$. We claim that for each $w \in N(S)$ and each $Y \in \mathcal A_w$, the cut-size of $Y$ in $G'$ is bounded by $c - 1$. Let $Z \in \mathcal A$ be the progenitor of $Y$. Every vertex that belongs to the neighbor set of $Y$ in $G'$ also belongs to the neighbor set of $Z$ in $G$. Moreover, since $w \in Z$, $Z \ne S$, and does not contain $u$. Therefore, there is a vertex in $S$ that belongs to $N(Z)$ in $G$. This vertex does not appear in $G'$, which implies it does not belong to $N(Y)$ in $G'$. Due to our earlier observation that 
$N(Y)$ in $G'$ is a subset of $N(Z)$ in $G$, $|N(Y)| \le c - 1$.

We construct $\mathcal B$ as follows. We start with an empty set and insert $S$ into $\mathcal B$. By the inductive hypothesis, for every $w \in N(S)$, there is a cover $\mathcal B_w$ of size at most $c!$ for $\mathcal A_w$. For each neighborhood $Z$ in each of these covers, we add to $\mathcal B$, the progenitor of $Z$. This finishes the construction of $\mathcal B$. The size of $\mathcal B$ is bounded by $|N(S)| \cdot c! + 1 \le c \cdot c! + 1 \le (c+1)!$.

It remains to show that $\mathcal B$ is a cover of $\mathcal A$. Consider any vertex $t \in \bigcup_{Z \in \mathcal A}Z$. If $t$ belongs to $S$, we are done. Otherwise, let $Z_1\in\mathcal A$ be such that $t \in Z_1$. There is a path in $G$ that goes from $v$ to some $w \in N(S)$ via vertices in $S$, and then using only vertices in $Z_1 \setminus S$ it goes from $w$ to $t$. Let $G_\star$ be the subgraph of $G$ induced by $Z_1 \setminus S$. Let $Y_1$ be the set of vertices in the connected component of $G_\star$ that contains $w$. Due to the path from $w$ to $t$, $t \in Y_1$. By definition, $Y_1 \in \mathcal A_w$, so $t \in \bigcup_{Y \in \mathcal A_w} Y= \bigcup_{Y \in \mathcal B_w} Y$.
Let $Y_2 \in \mathcal B_w$ be any set containing $t$. Its progenitor $Z_2 \supseteq Y_2$ belongs to $\mathcal B$, and therefore, $t \in \bigcup_{Z \in \mathcal B} Z$, which finishes the proof.
\end{proof}

\begin{corollary}\label{cor:how_many_can_cover}
Let $v$ be a vertex in a graph. Let $k,c \ge 1$ be integers and let $\delta \in (0,1)$.
The number of vertices $u$, for which $u \in \mbox{\findneigh}(u,k,\delta,c)$ is bounded by $k\cdot(c+1)!$.
\end{corollary}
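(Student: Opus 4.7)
The plan is to apply Lemma~\ref{lem:small_cover} to the family of isolated neighborhoods returned by \findneigh{} for the various vertices that cover $v$, and then bound the number of covering vertices by the size of the union of a small cover.

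First I would define the family $\mathcal A = \{ S_u : u \text{ covers } v \}$, where $S_u = \findneigh(u,k,\delta,c)$. By Lemma~\ref{lem:find-isolated-neighborhood}, each $S_u$ is a $(k,\delta,c)$-isolated neighborhood of $u$, so each $S_u$ is connected, has size at most $k$, and has cut-size at most $c$. By the definition of ``covers,'' $v \in S_u$ for every $u$ that covers $v$. Since each $S_u$ is connected and contains $v$, every $S_u$ is also a neighborhood of $v$ (in the sense of the first definition in Section 2), and its cut-size in $G$ is still bounded by $c$. Hence $\mathcal A$ is a family of isolated neighborhoods of $v$ with cut-size bounded by $c$, exactly in the form required to invoke Lemma~\ref{lem:small_cover}.

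Next I would apply Lemma~\ref{lem:small_cover} to obtain a subcover $\mathcal B \subseteq \mathcal A$ with $|\mathcal B| \le (c+1)!$, satisfying $\bigcup_{T \in \mathcal A} T = \bigcup_{T \in \mathcal B} T$. Since every set in $\mathcal B$ has size at most $k$, the total size of this union is at most $k \cdot (c+1)!$.

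Finally, to finish, I observe that for every $u$ that covers $v$ we have $u \in S_u$ (this is property 1 of Lemma~\ref{lem:find-isolated-neighborhood}, stating $v \in \findneigh(v,\cdot,\cdot,\cdot)$ applied with $u$ in place of $v$), so
\[
\{u : u \text{ covers } v\} \;\subseteq\; \bigcup_{T \in \mathcal A} T \;=\; \bigcup_{T \in \mathcal B} T,
\]
and the right-hand side has cardinality at most $k\cdot(c+1)!$. This gives the claimed bound. There is no real obstacle here beyond checking that $\mathcal A$ fits the hypothesis of Lemma~\ref{lem:small_cover}; the only subtlety is remembering that a set connected around $u$ and containing $v$ is simultaneously a neighborhood of $v$, so the small-cover lemma applies verbatim.
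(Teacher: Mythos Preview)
Your argument is exactly the intended one: the paper states the corollary without proof, as the immediate consequence of Lemma~\ref{lem:small_cover} obtained by taking $\mathcal A=\{S_u: u\text{ covers }v\}$, extracting a cover of size $(c+1)!$, and observing that each covering $u$ lies in $\bigcup_{T\in\mathcal B}T$, a set of size at most $k\cdot(c+1)!$. One small wrinkle you glossed over: your sentence ``each $S_u$ is a $(k,\delta,c)$-isolated neighborhood'' is not literally guaranteed by Lemma~\ref{lem:find-isolated-neighborhood}, since \findneigh{} may return $\{u\}$ when no isolated neighborhood exists; for $u\ne v$ this cannot happen (then $|S_u|\ge 2$), and if it happens for $u=v$ then in fact $v$ is the \emph{only} vertex covering $v$ (any other $u$ with $v\in S_u$ would exhibit a $(k,\delta,c)$-isolated neighborhood of $v$), so the bound holds trivially in that case.
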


\section{The Partitioning Oracle (Proof of Theorem \ref{theorem:partition-treewidth})}

Here we prove the main claim of the paper. We show a global partitioning algorithm that can easily and efficiently be simulated locally. The global algorithm is likely to find the desired partition of the input graph.

\begin{algorithm}[t]
\caption{{\tt Global-Partitioning}($k$,$\delta$,$h$)}
\label{alg:global-treewidth}
\lForAll {$v \in V$} {set $v$ as not marked} \\
\ForEach {$v \in V$} {
$S_v := {}$\findneigh($v$,$k$,$\delta$,$2(h + 1)$) \\
$r_v := \mbox{uniformly random value in $(0,1)$}$
}

\ForEach {\rm $v \in V$ in increasing order of $r_v$} {
  $U := \{ w \in S_v : w$ is not marked$\}$\\
  \lForAll{$w \in U$}{$f[w] := U$ }  \\
  Mark all vertices in $U$
}

Output $f$
\end{algorithm}

\begin{algorithm}[t]
\caption{{\tt Local-Partitioning}($q$,$k$,$\delta$,$h$) for a vertex $q$}
\label{alg:local-treewidth}
$Q_q := $ the set of vertices that cover $q$ (see Lemma~\ref{lem:find-covering}) \\
Let $u$ be the vertex in $Q_q$ with the lowest $r_u$\\
$S_u := {}$\findneigh($u$,$k$,$\delta$,$2(h + 1)$) \\
$P := \emptyset$ \\
\ForEach {$w \in S_u$} {
$Q_w := $ the set of vertices that cover $w$ (see Lemma~\ref{lem:find-covering})\\
Let $u_w$ be the vertex in $Q_w$ with the lowest $r_{u_w}$\\
\lIf {$u = u_w$}{$P := P \cup \{ w \}$ }
}
\Return $P$
\end{algorithm}

\begin{proof}
We want to set our parameters $\delta$ and $k$ so that the following inequalities hold:
\begin{eqnarray}
\delta & \le & \frac{\eps}{100 \cdot (2h+3)! \cdot (1 + \log k + \log (2h+3))},\label{eq:delta_bound}\\
k&\ge&\frac{\const\, d^5(h+1)^5}{\delta \eps^2}.\label{eq:k_bound}
\end{eqnarray}
By combining them, we obtain
$$k \ge \frac{\const00 \cdot d^5 \cdot (2h+3)! \cdot (1 + \log k + \log (2h+3)!)}{\eps^3}.$$
The inequality can be satisfied by values of $k$ that do not grow faster than
$$O\left(\frac{d^5 \cdot h^{O(h)} \cdot \log(d/\eps)}{\eps^3}\right).$$
Then we take the maximum $\delta$ that satisfies Equation~\ref{eq:delta_bound}.

Consider the global partitioning algorithm described as Algorithm~\ref{alg:global-treewidth} with parameters set as described above.
The algorithm constructs a partition of the vertices in the graph. It starts with all vertices in the graph unmarked. For each vertex $v \in V$, the algorithm tries to find a $(k,\delta,2(h+1))$-isolated neighborhood of $v$ in $G$. If one such neighborhood is found, $S_v$ is set to that neighborhood. Otherwise, if no such neighborhood exists, $S_v$ is set to $\{v\}$. Next the algorithm starts partitioning the graph. It considers the vertices in random order. For each vertex $v$ in that order, all the unmarked vertices in $S_v$ constitute a component in the partition and get marked.

Clearly, at the end of the execution of Algorithm~\ref{alg:global-treewidth}, all vertices must be marked. Therefore, $f(v)$ is well defined for all $v \in V$. Also, observe that when $f(v)$ is defined, we have $f(u) = f(v)$ for every $u \in f(v)$. Therefore, Claims 1 and 2 of the theorem hold for the function $f$ computed by Algorithm~\ref{alg:global-treewidth}.

Let us bound the probability that the number of edges between different parts is greater than $\eps |V|$. By Lemma \ref{lem:treewidthlocalcut} with the lemma's $\eps$ set to $\eps/d$, there exists a function $g:V \to \mathcal P(V)$ with the following properties:
\begin{enumerate}
\item For all $v \in V$, $v \in g(v)$.
\item For all $v \in V$, $|g(v)| \leq k$.
\item For all $v \in V$, $g(v)$ is connected.
\item Let $\mathcal B$ be the subset of $V$ such that $v \in \mathcal B$ if and only if $g(v)$ is a $(k,\delta,2(h+1))$-isolated neighborhood of $v$ in $G$. The size of $\mathcal{B}$ is at least $\left(1-\frac{\eps}{20d}\right)|V|$.
\end{enumerate}

Let us group the edges between different components in the partition given by $f$. We distinguish two kinds of edges: the edges incident to at least one vertex in $V\setminus \mathcal B$ and the edges with both endpoints in $\mathcal B$. Observe that the total number of the former edges is at most $\frac{\eps}{20}|V|$. It remains to bound the number of the latter edges that are cut. Consider a vertex $v \in \mathcal B$. Let $Q_v$ be the set of vertices that cover $v$ and let $m_v = |Q_v|$. Via Corollary~\ref{cor:how_many_can_cover}, $m_v \leq k \cdot (2h+3)!$. Let $q_1, q_2, \ldots, q_{m_v} \in Q_v$ be the sequence of vertices that cover $v$ in increasing order of $r$, i.e., $r_{q_1} \leq r_{q_2} \leq \ldots  \leq r_{q_{m_v}}$. For each $j \in \{1,2,\ldots,m_v\}$, let $S_v^{(j)} = \{S_{q_1},S_{q_2},\ldots, S_{q_j} \}$, where $S_{q_i}$ is the isolated neighborhood found by \findneigh{} starting from $q_i$. Note that, since $r$ is random, $S_v^{(j)}$ and $q_j$ are random variables for all $j \in \{1,2,\ldots, m_v \}$.

For vertices $u,v \in \mathcal B$, we say $u$ \emph{marks} $v$ if $v \in S_u$ and $v$ is not marked before $u$ is considered. Also, we say a vertex $u \in \mathcal B$ is \emph{marking} if $u$ marks some vertex in the graph. It is clear from the definition that, for any $j \in \{1,2,\ldots,m_v\}$, if $q_j$ is marking, then $S_{q_j} \not\subseteq \bigcup_{i=1}^{j-1}{S_{q_i}}$. This implies that if $q_j$ is marking, $S_{q_j}$ must be a member of every cover of $S_v^{(j)}$. By Lemma~\ref{lem:small_cover}, there exists a cover $B_j \subseteq S_v^{(j)}$ such that $|B_j| \leq (2h+3)!$. Thus, whatever the first $j$ sets $S^{(j)}_v$ are, the probability that $j$-th vertex $q_j$ is marking is bounded by $(2h+3)!/j$.
Let the number of marking vertices in $Q_v$ be $a_v$. We have
\begin{eqnarray*}
E[a_v] &=& \sum_{j=1}^{m_v}{\Pr[q_j \textrm{ is marking}]} \leq \sum_{j=1}^{m_v}{\frac{(2h+3)!}{j}} 
\le (2h+3)! \cdot(1 + \log m_v) \\
&\leq& (2h+3)! \cdot (1 + \log k + \log (2h+3)!).
\end{eqnarray*}
Let $M \subseteq \mathcal B$ be the set of marking vertices in the graph. It holds
$$ \sum\nolimits_{u \in M}{|S_u|} = \sum\nolimits_{u \in \mathcal B} {a_u}.$$
Thus,
$$ E\left[ \sum\nolimits_{u \in M}{|S_u|} \right] \leq (2h+3)! \cdot (1 + \log k + \log (2h+3)!) \cdot |\mathcal B|.$$
Note that, for each marking vertex $u$, the number of edges that have exactly one end in $S_u$ is at most $d |S_u| \delta$. This is also an upper bound for the number of edges going out of $u$'s component in the partition. Therefore, the expected number of cut edges with both ends in $\mathcal B$ is at most
$$ \delta d \cdot (2h+3)! \cdot (1 + \log k + \log (2h+3)!)\cdot |\mathcal B| \leq \frac{\eps}{100} |V|.
$$
Thus, by Markov inequality, the probability that the number of edges in the second set is greater than $\frac{\eps}{10}|V|$ is at most $1/10$. Therefore, the probability that the total number of edges in both sets is less than $\eps |V|$ is at least $9/10$, as required by Claim 3.

Finally, observe that the size of each partition is trivially bounded by $k$, which is required by Claim 4.

We now show how Algorithm~\ref{alg:global-treewidth} can be simulated locally. Consider the local Algorithm~\ref{alg:local-treewidth} that given a query $q \in V$, computes $f[q]$.  The local algorithm starts by computing the set of vertices that cover $q$. Then among the vertices in this set, it finds the vertex $u$ with the smallest value $r_u$. It is clear that $u$ is the vertex that marks $q$, and thus, $f[q]$ is a subset of $S_u$. Next the local algorithm considers each vertex in $S_u$ and checks whether that vertex is also marked by $u$. If it is, the vertex should also be included in $f[q]$. Clearly local Algorithm~\ref{alg:local-treewidth} computes exactly the same set $f[q]$ as the global Algorithm~\ref{alg:global-treewidth}, provided the selected random numbers are the same.

Let us bound the number of queries to the input graph that Algorithm~\ref{alg:local-treewidth} makes to answer a query about $q$:
\begin{itemize}
	\item Lemma~\ref{lem:find-covering} shows that finding $Q_q$, the set of vertices that cover $q$, requires $O(dk^{2h+4})$ queries to the input graph.
	\item By Lemma \ref{lem:find-isolated-neighborhood}, finding $S_u$ takes at most $O(dk^{2h+3})$ queries to the input graph.
	\item Finally, for each $w \in S_u$, checking whether $w$ is marked by $u$ also takes $O(dk^{4h+6})$ queries to the input graph. Since $|S_u| \le k$, it takes at most $O(dk^{4h+7})$ queries to find the subset of vertices in $S_u$ that are marked by $u$.
\end{itemize}
Summarizing, the oracle has to make at most $O(dk^{4h+7})$ queries to the input graph to answer any query about $f$. 

We assume that a dictionary operation requires $O(\log{n})$ time for a collection of size $n$. We use a dictionary to keep the previously generated $r_v$ in order to maintain consistency. The running time of the oracle to answer a query about $f$ is then at most $\tilde O(k^{4h+O(1)} \cdot \log{Q})$, as stated in Claim 5 of the theorem.

Finally, the partition is computed such that it is independent of queries to the oracle. It is only a function of coin tosses that correspond to Step 5 of Algorithm~\ref{alg:global-treewidth}\footnote{As a technical requirement to make the argument valid, we also assume that the loop in Step 6 of Procedure \findneigh{} considers vertices of the input graph in order independent of queries to the oracle.}.
\end{proof}

\bibliographystyle{alpha}
\bibliography{paper}

\end{document}